\newtheorem{lemma}{Lemma}
\newtheorem{proposition}{Proposition}
\newtheorem{definition}{Definition}
\newtheorem{remark}{Remark}
\algrenewcommand\algorithmicrequire{\textbf{Input:}}
\algrenewcommand\algorithmicensure{\textbf{Output:}}
\title{Online Resource Procurement and Allocation in a Hybrid Edge-Cloud Computing System}
\author{Thinh~Quang~Dinh,~\IEEEmembership{Student Member,~IEEE,}  Ben Liang,~\IEEEmembership{Fellow,~IEEE},\\
	Tony~Q.S.~Quek%
	,~\IEEEmembership{Fellow,~IEEE} and Hyundong Shin,~\IEEEmembership{Senior Member,~IEEE}
	\thanks{T. Q. Dinh was with Singapore University of Technology and Design, Singapore 487372. He is now with Trusting Social, Ho Chi Minh City, Vietnam. Emails (e-mail: quangthinh\_dinh@alumni.sutd.edu.sg).}
	\thanks{B. Liang is with the Department of Electrical and Computer Engineering, University of Toronto, Toronto, ON, M5S 3G4, Canada. Emails (e-mail: liang@ece.utoronto.ca.).}	
	\thanks{T. Q. S. Quek is with Singapore University of Technology and Design, Singapore 487372 and also with the Department of Electronics Engineering,
		Kyung Hee University, Yongin-si,   Gyeonggi-do, 17104, Korea (e-mail: tonyquek@sutd.edu.sg).}		
	\thanks{H. Shin is with the Department of Electronic Engineering, Kyung Hee University, Yongin 17104, South Korea (e-mail: hshin@khu.ac.kr).}
}
\begin{document}
	
	\maketitle
	
	\begin{abstract}
		By  {acquiring} cloud-like capacities at the edge of a network, edge computing  is expected to significantly improve user experience. In this paper, we formulate a hybrid edge-cloud computing system where an edge device with limited local resources can rent more from a cloud node and perform resource allocation to serve its users. The resource procurement and allocation decisions depend not only on the cloud's multiple rental options but also on the edge's local processing cost and capacity. We first propose an offline algorithm whose decisions are made with full information of future demand. Then, an online algorithm is proposed where the edge node makes irrevocable decisions in each timeslot without future information of demand. We show that both algorithms have constant performance bounds from the offline optimum. Numerical results acquired with Google cluster-usage traces indicate that the cost of the edge node can be substantially reduced by using the proposed algorithms,  {up to $80\%$ in comparison with baseline algorithms}. We also observe how the cloud's pricing structure and edge's local cost influence the procurement decisions.  
	\end{abstract}
	\begin{IEEEkeywords}
		Mobile edge computing, resource management, competitive analysis
	\end{IEEEkeywords}	
	
	\IEEEpeerreviewmaketitle
	
	\tikzstyle{decision} = [diamond, draw, fill=blue!20,
	text width=4.5em, text badly centered, node distance=3cm, inner sep=0pt]
	\tikzstyle{block} = [rectangle, draw, fill=blue!20,
	text width=5em, text centered, rounded corners, minimum height=4em]
	\tikzstyle{line} = [draw, -latex']
	\tikzstyle{cloud} = [draw, ellipse,fill=red!20, node distance=3cm,
	minimum height=2em]
	\tikzstyle{process} = [rectangle, minimum width=3cm, minimum height=1cm, text centered, draw=black, fill=white]

	\section{Introduction} \label{sec:intro}
	\lettrine{W}ithin the last decade, we have witnessed tremendous growth of data as well as the emergence of the Internet of Things. As a consequence, there is  {an outburst} of digital business that  {utilizes} more and more complex applications with heterogeneous resource requirements. To satisfy the increasing demand of computational power, among  {contemporary} solutions, cloud computing is favored due to its high scalability, accessibility, and availability that come with low storage and computing costs \cite{Luong2017Survey}. Cloud providers offer Infrastructure-as-a-Service (IaaS), which is a form of cloud computing that provides instances of virtualized physical resources, generally termed virtual machines (VMs). For example,  Amazon EC2 \cite{Amazon:price} and Microsoft Azure \cite{Azure:price} are two such services.
	\IEEEpubidadjcol
	There are commonly two pricing options to rent virtual resources: \emph{on-demand} and \emph{reservation} \cite{Amazon:price,Azure:price}. In the first  {option}, the accounting is purely based on the number of instance-hours used, while in the second  {one}, the users {pay a reservation fee in advance, i.e., upfront fee,} in exchange for free or  discounted  {resource usage} over a certain period. The on-demand rental is often considered a costly option, while the same can be said about the upfront fee in the reservation option if the reserved instances are not used sufficiently often. For organizations or users, it is important to achieve cost effective resource procurement and allocation of cloud computing resources. 
	
	Resource procurement and allocation in cloud computing environments have been well-studied \cite{Mao:2011:AMC:2063384.2063449,Malawski2012,Abrishami:2013:DWS:2388122.2388265,WangBroker2015,HuICDCS2015,HongSIGMETRIC2011}.  Many of these works focused on resource allocation with only an on-demand pricing model \cite{Mao:2011:AMC:2063384.2063449,Malawski2012,Abrishami:2013:DWS:2388122.2388265}. Since reserved resources are effective within a period of time, reservation introduces time-correlation in the  {decision} of both resource procurement and resource allocation. Hence, considering both on-demand and reservation pricing options increases the complexity of the problem. However, {leveraging} the discount prices offered by the reservation option can lead to substantial cost savings \cite{HongSIGMETRIC2011,WangBroker2015,HuICDCS2015}. 	 {In}  \cite{Mao:2011:AMC:2063384.2063449,Malawski2012,Abrishami:2013:DWS:2388122.2388265,WangBroker2015,HuICDCS2015,HongSIGMETRIC2011}, application owners/organizations were usually assumed to possess no computing or storage capacities.		
	
	Since the storage and computation cost has dramatically decreased over the last decade, cloud-like capacities have been moving toward the edge network. There are similarities in concept among Edge Computing (EC) \cite{Dinh2018,Hsu2018,Dinh2017,Satyanarayanan2017,BenMobile,MaoTut2017}, fog computing \cite{Chiang2016}, and cloudlets \cite{Satyanarayanan2009}, where services providers or peer helpers, with their own computational and storage power, can implement applications at near-user servers, namely edge devices. However, the edge devices' capacities are limited in comparison with cloud providers. Therefore, it is necessary to investigate hybrid edge-cloud computing systems, specifically, how the edge's capacity and its local processing cost affect the  {previously} mentioned resource allocation problem over cloud computing environments.
	
	There are very few existing works considering hybrid edge-cloud system, or hierarchical fog-cloud networks, where edge devices/lower-tier cloud nodes with their limited resource capacities  {need} to cooperate with high-tier ones.  They usually considered free subscription of  IaaS services  or a simple cost model (e.g., purely on-demand pricing) \cite{LinTPDS2017,GuTETC2017,ChenINFOCOM2017,ChenTMC2018,JiaoTON2017Smoothed,ChampatiIWQoS2015}, which are either impractical or costly. Hence, it remains an open question how edge server parameters, such as computation capacity and processing cost,  as well as the public clouds' pricing options, impact the edge resource allocation decision.

	This paper considers a hybrid edge-cloud computing scenario with an edge node and a public cloud node. The edge node has its own VMs. However, because the arriving VM requests can exceed the edge node's capacity, the edge node also rents remote VMs from the cloud and allocates requests to either rented remote VMs or its own VMs. We propose an optimization framework where the edge node  {performs} resource procurement and allocation in order to minimize its long-term operational cost. This scenario allows us to analyze how the edge node's total cost can be improved by its capacity and the cloud node's rental options. We first propose an offline pseudo-polynomial algorithm whose decisions are made with full information of future demand. We then propose an online algorithm where the edge node makes irrevocable decisions without knowing future information. Moreover, the proposed algorithms' performance guarantees are derived. 
	
	The contributions of this work are summarized as follows:
	\begin{itemize}
		\item  An optimization framework is formulated where the edge node exploits its own resources and the cloud's pricing structure  to minimize its long-term operational cost. In the offline setting with full information of future demand, since finding an optimal solution is intractable, we propose a pseudo-polynomial approximation algorithm, which is shown to achieve a 2-approximation ratio.
		\item  We then propose an online algorithm that does not require any information of future demand. A noticeable feature here is that the proposed online strategy makes irrevocable decisions in each time slot. It achieves a constant competitive ratio of $\max \{6 , \frac{2p}{\lambda}\}$, where $p$ and $\lambda$ are two hyper-parameters related to the pricing structure which will be defined later in the paper. 
		\item Through simulations based on Google cluster-usage traces \cite{clusterdata:Reiss2011}, we observe that the edge node can significantly reduce its operational cost when the edge capacity is considered. We also observe the impact of the cloud's pricing structure and edge's processing cost on the procurement decisions.
	\end{itemize}

	The rest of the paper is organized as follows. In Section
	\ref{section:related}, we present the related work. Section \ref{section:sys_model} describes the
	system model and the problem formulation. In Section \ref{section:offline},
	we propose an offline algorithm to solve this problem which has pseudo-polynomial running time. Section \ref{section:online}  {proposes} an online algorithm for this problem and presents its performance guarantee. Section \ref{section:sim}
	discusses the empirical evaluations based on real-world traces. Conclusions are then given in Section \ref{section:conclusion}.
	
	\section{Related Work} \label{section:related}
	Resource procurement and allocation  {have} been well-studied in many existing works on cloud computing. Some works  {focused} on resource allocation  with a simple procurement model (e.g., applying just on-demand pricing) \cite{Mao:2011:AMC:2063384.2063449,Malawski2012,Abrishami:2013:DWS:2388122.2388265}. Mao and Humphrey \cite{Mao:2011:AMC:2063384.2063449}  proposed heuristic workflow scheduling  strategies  which minimized the execution cost of the workflow. They tried to ensure the jobs' execution deadline as a soft constraint. The Dynamic Provisioning Dynamic Scheduling
	algorithm was proposed by Malawski \emph{et al.} \cite{Malawski2012}, which maximized the number of executed workflows under some quality-of-service constraints.  In \cite{Abrishami:2013:DWS:2388122.2388265} with the same objective as in \cite{Mao:2011:AMC:2063384.2063449}, tasks on a partial critical path were allocated on the same instance by Abrishami \emph{et al.}'s algorithms.
	
	On the other hand, other studies focused on procurement  by dealing with multiple pricing options including reserved instances in order to take advantages of discounted prices \cite{HongSIGMETRIC2011,WangBroker2015,HuICDCS2015}.  Wang \emph{et al.} \cite{WangBroker2015} considered one on-demand and one reserved instance options and proposed online cloud instance acquisition algorithms without full information of future demand, while Hu \emph{et al.} \cite{HuICDCS2015} considered multiple options of different reserved instances in a similar online setting. Hong \emph{et al.} \cite{HongSIGMETRIC2011} first proposed a dynamic programming method to rent purely on-demand	instances to reduce their system's \emph{margin costs}, and then proposed another algorithm utilizing both on-demand and reserved instances to achieve their system's optimal true costs with full information of future demand.

	There are few works considering resource allocation in hybrid clouds, edge-cloud, or fog-cloud networks.   Existing works usually considered a simple cost model such as free cloud access or purely on-demand pricing \cite{ChenINFOCOM2017,JiaoTON2017Smoothed,ChenTMC2018,LinTPDS2017,GuTETC2017,ChampatiIWQoS2015}. Chen \emph{et al.} \cite{ChenTMC2018,ChenINFOCOM2017} proposed  semidefinite-programming based algorithms in order to minimize both energy and latency of workloads in a simple hybrid edge-cloud system. Jiao \emph{et al.} \cite{JiaoTON2017Smoothed} considered a task scheduling problem in multi-tier cloud computing system where the system  {could} jointly optimize its own computational and network resources to reduce the resource allocation cost and resource reconfiguration cost. Furthermore, fog-cloud systems helped to improve the performance of current services  {by reducing} latency and bandwidth consumption in online gaming \cite{LinTPDS2017}, or the operational cost of medical cyber-physical systems \cite{GuTETC2017}. All of these works considered only on-demand pricing, while ignoring the available discounts through reservation can lead to a costly design. In this work, we  leverage the local VMs at the edge node and remote cloud VMs in both on-demand and reservation pricing options.

	Our proposed algorithm is  {an online strategy} where the sequence of decisions is irrevocably made without future knowledge \cite{Borodin:1998:OCC:290169}. In our problem, the edge node needs to decide whether to reserve instances at any time, which can be classified as a variant of the ski rental problem \cite{Karlin:1990:CRA:320176.320216}, a class of rent-or-buy problems. The ski rental problem has been expanded in multiple directions such as the Bahncard problem in transportation \cite{FleischerBahnCard}, TCP acknowledgement problem in networking \cite{TCPAck}, and resource acquisition and resource allocation in cloud computing \cite{WangBroker2015,HuICDCS2015}. In \cite{Karlin:1990:CRA:320176.320216,FleischerBahnCard,TCPAck},   a decision maker only deals with a single level of demand. The problem in our scenario is more complex as cloud computing demands are in multiple levels (e.g., multiple VMs) \cite{WangBroker2015,HuICDCS2015}. Dealing with multiple levels of demand, Wang \emph{et al.} \cite{WangBroker2015} reduced their problem into multiple independent two-option  ski rental problems. Hu \emph{et al.}  \cite{HuICDCS2015} considered multiple reserved instance acquisition as a two-dimensional parking-permit problem. However, in previous works \cite{WangBroker2015,HuICDCS2015}, the number of cloud instances  {which can} be rent in each option is assumed to be infinite. In our work, adding the edge's limited capacity changes the structure of the problem, since here when the edge node's capacity is fully occupied, the excess VM requests will be assigned to the public cloud, in either on-demand VMs or reserved VMs. Hence, our design must account for the impact of the edge's capacity, coupled with the cloud's pricing structure, on the procurement and allocation decisions.
	\section{System Model and Problem Statement} \label{section:sys_model}
	
	In this section, we describe the overall system model, explain how the resource at the cloud and the edge is utilized, and state our optimization problem. The commonly used notation  {throughout} this paper is given in Table \ref{notation}.
	
	\subsection{Computing System Model with Edge and Cloud}
	\begin{table}
		\caption{Notation Used Throughout the Paper.} \label{notation}
		\begin{center}
			\renewcommand{\arraystretch}{1.1}
			{
				\begin{tabular}{p{0.15\linewidth}p{0.7\linewidth}}
					
					\hline
					{\bf Notation} & {\bf Definition} \\ \hline
					$i, t$ & index of time\\
					$l$ & index of demand level\\
					$d_t$ & the aggregated demand of arrival  VM requests at $t$\\
					$r_t$ & the number of remote VMs reserved at $t$ \\
					$n_t$ & the numbers of remote reserved VMs that remain active at $t$ \\
					$\gamma$ & the upfront price for a remote reserved VM\\	
					$\theta$ & the discount cost for using a remote reserved VM  {per time slot}\\
					$p'$ & the cost to rent an on-demand VM per  {time slot}\\								
					$\lambda'$ & the physical cost of running one VM at the edge per  {time slot}\\
					$\tau $ & the reservation period of a remote reserved VM  \\
					$a_t^r$ & the number of requests assigned to remote reserved VMs  {at $t$}\\
					$a_t^o$ & the number of requests assigned to remote on-demand VMs  {at $t$}\\	
					$a_t^w$ & the number of requests assigned to the edge node's VMs  {at $t$}\\	
					$w$ & the number of VMs at the edge\\		
					\hline
				\end{tabular}
			}
			
		\end{center} 
	\end{table}%
	
	We consider a multi-tier computing system with one cloud node and one edge node, as  shown in Fig. \ref{fig:simplified_sys}.  The edge node serves  multiple users, such as mobile users or IoT devices, which have computational jobs to be executed.  The system is time slotted. User requests arrive  at the edge node in every time slot.  Let $d_t$ be the  {total} number of  {VMs requested} by users at time $t$.  {For simplicity, we assume that each user request lasts for one time slot. However, our system model can be extended to accommodate user requests that  {last} for multiple time slots. In that case, $d_t$ accounts for all VMs from new and on-going user requests in time slot $t$. However, our analysis neglects the cost of re-assigning these user requests between VMs.}
	
	The edge node  {has its} own VMs to process the requests.  However, since the capacity of the edge node is limited, the edge node  may need to rent remote VMs from the cloud node to scale up its capacity. There are multiple cloud rental options, each of which has a different cost structure. The edge node  {decides} how  {many} remote VMs it should rent and how to assign the arriving VM  requests to its own VMs or the rented VMs.
	\subsection{Cloud's Resource}
	The cloud service  provider   offers the edge node two options to rent  {its} VMs. The first option is called ``on-demand'' where the edge node can immediately rent a VM  {that} lasts for one time slot with an on-demand price $p'$. In the second option, called ``reserved'', if a VM is reserved at time $t$, it will be effective from $t$ to $t + \tau -1$, where $\tau$ is the reservation period.  {Here, $\tau$ is a given value, not a decision variable.} Let $\gamma$ and $\theta$ denote the  {upfront} price of renting a single remote reserved VM and the  {per-slot} cost of using a reserved VM, respectively. Obviously,  {we should have} $0\leq \theta \leq p'$,  {since otherwise there would be no business case for reserved VMs}. Table \ref{tab:Amazon} shows two examples of on-demand and reservation prices in Amazon EC2.  For ease of exposition, we refer to a VM rented with on-demand price as \emph{remote on-demand VM}, and a VM reserved within $\tau$ time slots  \emph{remote reserved VM}. 
	
	Let $r_t \geq 0$ denote the number of new remote reserved VMs that the edge node decides to rent at time $t$. At time $t$, the number of remote reserved VMs  {that remain} active is
	\begin{figure}
		\centering
		\includegraphics[width=0.6\linewidth]{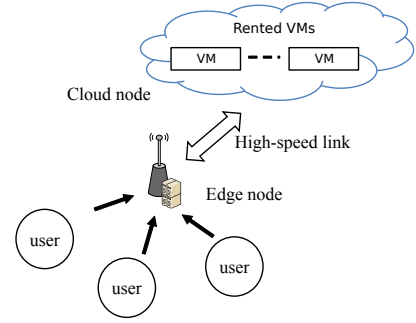} 
		\caption{Computing system with edge and cloud.  \label{fig:simplified_sys}}
	\end{figure}
	\begin{align}
	n_t =   \sum_{i=t - \tau +1}^{t} r_i. \label{eqn:no_effective_VM}
	\end{align}
	Let $a_t^r$ and $a_t^o$ denote the number of VM requests assigned  {to} remote reserved VMs and remote on-demand VMs, respectively.  {Clearly, we have $a_t^r \leq n_t$,   and $a_t^o = 0$ if there are unused reserved VMs, i.e.,  $a_t^r < n_t$.}
	\begin{table} [t]
		\caption{Pricing of on-demand and reserved instances ( Linux, US East ) in Amazon EC2, as of Jan. 10, 2018 \cite{Amazon:price}. }
		\begin{center}
			\footnotesize
			\begin{tabular}{ |c|c|c|c| } 
				\hline
				\bf Instance type & \bf Pricing option & \bf Upfront & \bf Hourly \\
				\hline
				\multirow{2}{4em}{m3.medium} &  On-demand & \$0 & \$0.067 \\ 
				&  1-year reservation & \$211 & \$0.016 \\ 
				\hline
				\multirow{2}{4em}{c3.large} &  On-demand & \$0 & \$0.105 \\ 
				&  1-year reservation & \$326 & \$0.025 \\ 
				\hline			
			\end{tabular} \label{tab:Amazon}	
		\end{center}	
	\end{table}	
	\subsection{Edge's Resource}
	The edge node has its own local capacity $w$, i.e., the number of local VMs  {at} the edge node. We define $a_t^w$ as the number of VM requests the edge node locally processes and $\lambda'$ as the cost of locally processing a unit of VM request.  {The local processing cost is generally defined.}  {As an example, in Section \ref{section:sim}}, we consider  {it} as the electrical cost incurred by physical processors. 	
	\begin{remark} \label{remark:brief_analysis}
		In  {the} extreme case  {where} the edge node's processing cost is greater than or equal to the price of using a remote on-demand VM, i.e., $\lambda' \geq p'$, we should not use the edge node, and should instead allocate VM requests to remote reserved VMs and remote on-demand VMs. Then, the problem is reduced to the  {one} in \cite{WangBroker2015}.
		
		In  the case  where the  usage cost  of  a remote reserved VM is   {greater than or equal to} the edge node's processing cost, i.e., $ {\lambda'} \leq \theta$, a trivial solution is to allocate the VM requests to the edge's VMs first. Then, the excess  {requests} are allocated to either remote reserved VMs or remote on-demand VMs, which  {is again reduced to the problem in} \cite{WangBroker2015}. 
		
		Hence, in this work, we  {only need to} consider the case  {where} $\theta <  {\lambda' < p'}$.
	\end{remark}
	
	%
	\subsection{Problem Formulation} \label{section:sys_model:prob_for}
	
	We consider some time period of system operation $T$, which is assumed to be a multiple of $\tau$, i.e., $T=K\tau$ where $K$ is a positive integer. The user demands over this time period is
	$\{d_1, \dots , d_T \}$. To serve these demands, the edge node decides in each time slot $a^w_t$, $a^o_t$, $r_t$, and $a^r_t$. Then, its total cost is
	\begin{align}
	c = ~ \lambda' \sum_{t=1}^{T}a_t^w + p' \sum_{t=1}^{T} a^o_t + \gamma \sum_{t=1}^{T} r_t + \theta \sum_{t=1}^{T} a^r_t . \label{eqn:cost_func}
	\end{align}
	The first term of (\ref{eqn:cost_func}) is the total cost of processing requests at the edge; the second one is the total  cost of using remote on-demand VMs; the third and the final one are the total costs of reserving  and using remote reserved VMs, respectively.
	\begin{remark} \label{remark:local_proces}
		At each timeslot,  {if $n_t >0$}, the edge node should assign  {new} VM requests to remote reserved  {VMs} first, since $\theta < \lambda' <p'$ as explained in Remark \ref{remark:brief_analysis}. Hence,
		\begin{align}
		a_t^r = \min \{n_t , d_t \}. \label{eqn:allocation_decision:reserved_VM}
		\end{align}	
		{Furthermore, since $\lambda' < p'$, we should always allocate the remaining requests to local processing at the edge before using remote on-demand instances. Hence, we have}
		\begin{align}
		a_t^w = \begin{cases}
		d_t - n_t, \quad &\text{ {if} $0 < d_t -n_t \leq w$},\\
		w, \quad &\text{ {if} $d_t - n_t > w$,}
		\end{cases} \label{eqn:allocation_decision:edge_VM}
		\end{align}
		{and},
		\begin{align}
		a_t^o = (d_t - a_t^w - n_t)^+, \label{eqn:allocation_decision:ondemand_VM}
		\end{align}
		where
		\begin{align}
		{x^+  = \max \{0, x \}}. \nonumber
		\end{align}		
	\end{remark}
	By observing the the relation between $n_t$, $a_t^r$, $a^w_t$  and $a_t^o$ as explained in Remark \ref{remark:local_proces}, we can rewrite (\ref{eqn:cost_func}) as the following:	
	\begin{align}
	 c =&  (\lambda' - \theta) \sum_{t=1}^{T} a_t^w +(p'-\theta) \sum_{t=1}^{T} (d_t - a_t^w - n_t)^+ \nonumber \\
	 &+	\gamma \sum_{t=1}^{T} r_t  + \theta \sum_{t=1}^{T} d_t, \label{eqn:obj_func}
	\end{align} 
	The final term of (\ref{eqn:obj_func}) is {the cost of using only pre-reserved VMs to serve all requests, which is} the minimum cost to process tasks no matter where they are allocated since $\theta < \lambda' < p'$. The first three terms of (\ref{eqn:obj_func}) are the extra costs if VMs are allocated to other VMs. These terms are analogous to the first three terms of (\ref{eqn:cost_func}).
	
	From the above, we see that the edge node only needs to make a sequence of reservation decisions $\mathbf{r}= \{r_1, \dots , r_T \}$ to minimize  {the} total cost, i.e.,
	\begin{align}
	\mathscr P_1:	&\min_{ {\mathbf{r} \in \mathbb{N}^T}} \lambda \sum_{t=1}^{T} a_t^w  + p \sum_{t=1}^{T} (d_t - a_t^w - n_t)^+ + \gamma \sum_{t=1}^{T} r_t  , \label{eqn:new_obj_func}\\
	\mathrm{s.t.} & ~~~(\ref{eqn:no_effective_VM})~ \mathrm{and}~  (\ref{eqn:allocation_decision:edge_VM}), \nonumber
	\end{align}
	where $p:= p' - \theta$ and $\lambda:= \lambda' - \theta$. {Note that} since $\theta \sum_{t=1}^{T} d_t$ is a constant, minimizing (\ref{eqn:new_obj_func}) is equivalent to minimize (\ref{eqn:obj_func}).  
	
	{We note that $\mathscr P_1$ may be viewed as} an extension  {to} the  cloud  instance  acquisition problem in \cite{WangBroker2015},  {where} a cloud broker  rents remote reserved VMs  {and} remote on-demand VMs to serve users' demand. The cloud broker in \cite{WangBroker2015} can be considered as  {an} edge node without local capacity, i.e., $w=0$. In our work, since we consider  a more general edge node with local computing capacity, its capacity and the  local processing cost  affect the edge node's cloud instance procurement decisions.  This substantially alters the structure of the optimization problem and adds to its difficulty. 
	
	 {In $\mathscr P_1$, we focus on the cost at the edge node to serve user demands, without considering the difference in user experience between edge VMs and cloud VMs. However, our formulation is generally applicable. Often the difference in user experience can be negligible, e.g., when the edge node and the cloud are connected by a high-speed link. If it is not negligible, we can modify the cost of edge usage, $\lambda'$, to reflect the priority of VM utilization due to user experience.}  {However, we note also that a decreasing $\lambda'$ pushes the problem $\mathscr P_1$ toward the second case in Remark \ref{remark:brief_analysis}, and when $\lambda' \leq \theta$, the problem is  reduced to the  {one} in \cite{WangBroker2015}.}
	
	Problem $\mathscr P_1$ is combinatorial optimization. It is generally challenging to solve even in the offline setting where the user demands are known in advance. In the more practical online setting, where random user demands arrive dynamically over time,  {it is} even more challenging to design a solution to provide a certain performance guarantee. 	
	
	\subsection{Approximation and Competitive Ratios}
	
	In the following, we state the standard definitions of approximation and competitive ratios, which will be used in the evaluation of the performance of the proposed solution.
	
	\begin{definition}
		Given a sequence of demands $\mathbf{d} = \{d_1, \dots, d_T \}$, let $c^* (\mathbf{d})$ denote the offline optimal cost that could be achieved. {Suppose} an offline algorithm  {achieves} a cost $c^{\mathrm{Off}} (\mathbf{d})$.  {An} approximation ratio $\xi$  {of this offline algorithm} is a constant   {such that for} all possible $\mathbf{d}$,
		\begin{align}
		\nonumber	\frac{c^{\mathrm{Off}} (\mathbf{d})}{c^* (\mathbf{d})} \leq \xi.
		\end{align}
	\end{definition}
	\begin{definition}
		{Given a sequence of demands $\mathbf{d} = \{d_1, \dots, d_T \}$, suppose an online algorithm achieves a cost $c^{\mathrm{On}} (\mathbf{d})$. An competitive ratio $\zeta$ of this online algorithm is  a constant   {such that for} all possible $\mathbf{d}$,}
		\begin{align}
		\nonumber	\frac{c^{\mathrm{On}} (\mathbf{d})}{c^* (\mathbf{d})} \leq \zeta.
		\end{align}	
	\end{definition}	
	Hence,  {the} approximation ratio and competitive ratio are metrics to analyze worst-case performance of offline and online algorithms, respectively. Note that these two ratios are greater than or equal to one.  {Hence}, with $\theta \geq 0$, any ratios obtained with respect to (\ref{eqn:new_obj_func}) still hold with respect to (\ref{eqn:obj_func}). Therefore,  {in this work, we focus on analyzing} the approximation and competitive ratios  {with respect to (\ref{eqn:new_obj_func})}. 
	

	\section{Offline Resource Procurement and Allocation Algorithm} \label{section:offline}
	\begin{figure}
		\centering
		\includegraphics[width=0.8\linewidth]{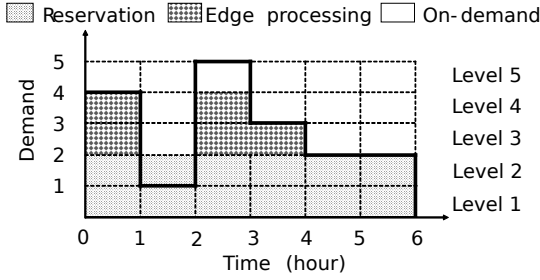} 
		\caption{The resource planning with $\tau=6$ and $T=6$, $r=2$ and $w =2$.  \label{fig:single_interval}}
	\end{figure}

	In this  {section}, we propose an offline  {approximation} algorithm when the demands in all time slots $\mathbf{d} = \{d_1, \dots, d_T\}$  {are} given,  {which has pseudo-polynomial run time}.  The design of  {this} offline algorithm  {will inspire} the online algorithm in Section \ref{section:online}. Furthermore,  {its approximation ratio provides an} intermediate step to derive the competitive  {ratio} of the online algorithm.
	
	\subsection{Algorithm Description}
	
	{We} divide the demands into $ d^{\max}$ levels, where $ d^{\max}$ is the peak demand, i.e., $d^{\max} := \max_t d_t$.	For example, in Fig. \ref{fig:single_interval}, the demands are divided into $5$ levels.  {Let} $d^l_t$  {denote} the demand
	at time $t$ in level $l$, such that
	\begin{align}
	d^l_t =	\begin{cases}
	1 & \text{if $d_t \geq l$},\\
	0 & \text{otherwise.}
	\end{cases}
	\end{align}
	Let utilization $u^l$  denote the number of  {time slots} $d_t$ is greater than or equal to $l$, i.e.,
	\begin{align}
	u^l = \sum_{t=1}^{T} d_t^l. \label{eqn:utilization}
	\end{align}
	We note that since $d_t^l \leq d_t^{l-1}$,	$u^l$ is a non-increasing function with respect to $l$.
	
	The offline algorithm is described as follows. First, we consider the $K$ non-overlapping intervals,  {each of $\tau$ duration}, that comprise the $T$ time period as described in Section \ref{section:sys_model:prob_for}. Let $I_k, k \in \{1,\dots, K\},$ denote  {the intervals}.  The proposed offline algorithm  {decides} how many remote VMs should be  reserved at the beginning of each interval $I_k$,  {i.e., when} $t = (k-1)\tau +1$. Let
	\begin{align}
	u^l_k = \sum_{t \in I_k} d_t^l
	\end{align}
	denote the utilization of level $l$ in $I_k$. 
	
	Consider level $l = 1$. Based on Remark \ref{remark:local_proces}, if a VM is reserved for the demand at this level,  {user} requests from $l=2$ to $w+1$ are allocated to  {edge} VMs and  {the} ones from $l=w+2$ to $d^{\max}$ are allocated to on-demand VMs. Otherwise,  {the} requests from $l=1$ to $w$ are allocated to edge VMs and  {the} ones from $w+1$ to $d^{\max}$ are allocated to on-demand VM. Therefore,   VM reservation is justified if
	\begin{align}
	\nonumber	\gamma + \lambda \sum_{j = 2}^{1+w} u^j_k + p \sum_{j=w+2}^{ d^{\max}}u^j_k \leq \lambda \sum_{j = l}^{w} u^j_k + p \sum_{j = w+1}^{ d^{\max}} u^j_k, 
	\end{align}
	which implies, 
	\begin{align}
	\nonumber	\gamma \leq \lambda u^1_k + (p - \lambda) u_k^{w+1} .
	\end{align}
	More generally, consider level $l$ when $l-1$ VMs are already reserved, the reservation at level $l$ is justified if
	\begin{align}
	\nonumber		 & l \gamma + \lambda \!\!\! \sum_{j = l+1}^{l+w} \!\!\!\! u^j_k + p \!\!\!\!\!\! \sum_{j=l+w+1}^{ d^{\max}}\!\! \!\!\!\! u^j_k \leq (l-1) \gamma
	+ \lambda \!\!\! \sum_{j = l}^{l+w-1} \!\!\!\! u^j_k + p \!\!\!\! \sum_{j = l + w}^{ d^{\max}} \!\!\!\! u^j_k ,
	\end{align} 
	which implies
	\begin{align}
	\nonumber	  \gamma \leq \lambda u^l_k + (p-\lambda)u^{l+w}_k.
	\end{align}
	Therefore, from demand level $l=1$ to $d^{\max}$, the edge node  {reserves one additional} VM at  {each} level $l$  {if and only if}
	\begin{align}
	\gamma \leq \lambda u^l_k + (p-\lambda)u^{l+w}_k. \label{eqn:algorithm1:vital_ineq}
	\end{align}
	
	This algorithm gives the total number of VMs that should be reserved  {for $I_k$. Then}, the  {user} requests are allocated to the three types of VMs according to  {(\ref{eqn:allocation_decision:reserved_VM}), (\ref{eqn:allocation_decision:edge_VM}), and (\ref{eqn:allocation_decision:ondemand_VM})}.	We term this the  Offline Resource Procurement  {and Allocation} Algorithm (OfflineRPAA) and summarize it in Algorithm \ref{algorithm:heu_per_plan}.  {This algorithm has $O(d^{\max}T)$ time complexity and $O(T)$ space complexity, where $T$ is the length of time horizon and $d^{\max}$ is the peak computing demand.}
	
		\begin{algorithm}    [t]                
			\caption{Offline Resource Procurement  {and Allocation} Algorithm (OfflineRPAA)}          
			\label{algorithm:heu_per_plan}                           
			{
				\begin{algorithmic}[1]                    
					\Require Segment $T$ into intervals $\{I_k\}_{k = 1,2, \ldots, K}$, each with length $\tau$. Initiate $r_t := 0$  {for all $t$}. For each segment $I_k$, the edge node knows $d_t, t \in I_k$, the pricing structure's hyper-parameters $\gamma$, $p$, and $\lambda$.
					\Ensure For each $I_k$, we compute reservation decision $r_{(k-1)\tau+1}$, allocation decisions $a^r_t$, $a^o_t$ and $a^w_t$, $t\in I_k$.
					\For  {all  {intervals} $\{I_k\}$}
					\For {$l = 1$ to $l= d^{\max}$}
					\If  {$\gamma \leq \lambda u^l_k + (p-\lambda)u^{l+w}_k$}
					\State {$r_{(k-1)\tau+1} \leftarrow r_{(k-1)\tau+1} +1.$}	
					\EndIf
					\EndFor
					\EndFor
					\State {At each time slot $t$,
						\begin{align}
						\nonumber	& a_t^r = \min \{n_t , d_t \}\\
						\nonumber 	&	a_t^w = \begin{cases}
						(d_t - n_t)^+, \quad &\text{ {if} $0 \leq d_t -n_t \leq w$}\\
						w, \quad &\text{ {if} $d_t - n_t > w$}
						\end{cases}\\
						\nonumber & a_t^o = (d_t - a_t^w - n_t)^+.
						\end{align}
					}
				\end{algorithmic}}
			\end{algorithm}
	
	\subsection{Performance Guarantee}
	In this section, we show  {that} Algorithm \ref{algorithm:heu_per_plan} achieves  {a} $2$-approximation ratio. First,  let $\mathcal{X}' \subset \mathbb{N}^T$ denote the set of solutions in which reservation decisions are made only at the beginning of each interval, i.e.,
	\begin{align}
	\mathcal{X}' = \Big \{  \{ r_t \} \in \mathbb{N}^T |  r_t = 0, & \text{when } t \neq (k-1)\tau +1 , \nonumber \\
	& k \in \{1,\dots, K \}  & \Big \}  .
	\end{align}
	
	We  {will show} that  the solution generated by Algorithm \ref{algorithm:heu_per_plan} achieves the smallest cost among all $\mathbf{r} \in \mathcal{X}'$ for $\mathscr P_1$. Then, we will  {show} that there exists  a solution $\mathbf{r}^f \in \mathcal{X}'$ that achieves  {a} $2$-approximation ratio. Hence, Algorithm \ref{algorithm:heu_per_plan} achieves $2$-approximation ratio.
	
	Finding a  solution $\mathbf{r}  \in \mathcal{X}'$ to minimize $\mathscr P_1$  {is equivalent to the following:}
	\begin{align}
	\nonumber	\mathscr P_2:	\min_{ \mathbf{r}  \in \mathcal{X}'}\sum_{k  = 1}^{K} \Big [ & \lambda \sum_{t  {\in} I_k} a_t^w + p \sum_{t  {\in} I_k} (d_t - a_t^w - n_t)^+  + \gamma \sum_{t  {\in} I_k} r_t  \Big ].
	\end{align} 
	{Since the remote reserved VMs under Algorithm \ref{algorithm:heu_per_plan} span only one interval}, $\mathscr P_2$ can be decomposed into $K$ independent  {sub-problems as follows}.
	\begin{align}
		\mathscr P_2^k:	\min_{ {r_{(k-1)\tau +1}   }} & \gamma r_{(k-1)\tau +1} + p \sum_{t \in I_k} (d_t - a_t^w - n_t)^+ \nonumber \\
		&+ \lambda \sum_{t \in I_k} a_t^w . \label{eqn:objective_value:I_k}
	\end{align}	
	\begin{lemma}\label{lemma:alg_1_optimal}
		Algorithm \ref{algorithm:heu_per_plan} achieves an optimal decision in each $\mathscr P_2^k$. In other words, Algorithm \ref{algorithm:heu_per_plan} provides the lowest cost for $\mathscr P_1$,  {restricted to $\mathbf{r} \in \mathcal{X}'$}.
	\end{lemma}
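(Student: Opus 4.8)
The plan is to show that, within a single interval $I_k$, the objective of $\mathscr P_2^k$ is a convex function of the lone reservation decision $r := r_{(k-1)\tau+1}$, and that the greedy increments performed by Algorithm~\ref{algorithm:heu_per_plan} stop exactly at the minimizer of this convex function. Since $\mathscr P_2$ separates into the independent subproblems $\mathscr P_2^k$, optimality in each $I_k$ immediately gives optimality for $\mathscr P_1$ restricted to $\mathcal X'$.

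First I would fix $r$ and write the resulting cost in closed form. Reserving $r$ VMs keeps $n_t = r$ throughout $I_k$, so by Remark~\ref{remark:local_proces} the optimal allocation serves demand levels $1,\dots,r$ on reserved VMs, levels $r+1,\dots,r+w$ on the edge, and levels $r+w+1,\dots,d^{\max}$ on remote on-demand VMs. In terms of the per-level utilizations $u^j_k$ this gives
\begin{align}
\nonumber C_k(r) = \gamma r + \lambda \sum_{j=r+1}^{r+w} u^j_k + p \sum_{j=r+w+1}^{d^{\max}} u^j_k ,
\end{align}
with the convention $u^j_k = 0$ for $j > d^{\max}$. A short telescoping computation of the first difference then yields
\begin{align}
\nonumber C_k(r) - C_k(r-1) = \gamma - \lambda u^r_k - (p-\lambda) u^{r+w}_k ,
\end{align}
so the marginal cost of the $r$-th reservation is non-positive exactly when $\gamma \leq \lambda u^r_k + (p-\lambda) u^{r+w}_k$, which is precisely the reservation test in Algorithm~\ref{algorithm:heu_per_plan} with $l=r$.

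The crucial step is to establish that $C_k$ is discretely convex, i.e.\ that $C_k(r)-C_k(r-1)$ is non-decreasing in $r$. This follows because $u^l_k$ is non-increasing in $l$ (as already noted for $u^l$), and the coefficients satisfy $\lambda > 0$ and $p-\lambda > 0$ under the standing assumption $\theta < \lambda' < p'$ of Remark~\ref{remark:brief_analysis}; hence both $-\lambda u^r_k$ and $-(p-\lambda) u^{r+w}_k$ are non-decreasing in $r$. I expect this monotonicity to be the main obstacle, as it is exactly what upgrades the greedy ``reserve while beneficial'' rule from a local to a global guarantee.

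Given convexity, the right-hand side $\lambda u^l_k + (p-\lambda) u^{l+w}_k$ is non-increasing in $l$, so the set of levels that pass the reservation test is a prefix $\{1,\dots,r^\star\}$; thus Algorithm~\ref{algorithm:heu_per_plan} reserves exactly $r^\star$ VMs, where $r^\star$ is the last index with non-positive marginal cost. Convexity then guarantees $C_k(r^\star) \leq C_k(r)$ for every $r \in \mathbb N$, so the algorithm's decision minimizes $\mathscr P_2^k$. Summing the optimal per-interval costs over $k=1,\dots,K$ establishes that Algorithm~\ref{algorithm:heu_per_plan} attains the minimum of $\mathscr P_1$ over $\mathbf r \in \mathcal X'$, as claimed.
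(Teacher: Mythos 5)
Your proof is correct, and it rests on the same two pillars as the paper's own proof---the per-level cost accounting within each interval and the monotonicity of $u^l_k$ in $l$---but it packages them along a genuinely different route. The paper never writes the closed-form cost $C_k(r)$ and never mentions convexity: it fixes an optimal per-interval reservation count $r^+_k$, extracts marginal inequalities from its optimality, and argues by contradiction (using monotonicity of $u^l_k$ twice) that Algorithm \ref{algorithm:heu_per_plan} reserves neither above nor below $r^+_k$. You instead prove the stronger structural fact that the first difference $C_k(r)-C_k(r-1)=\gamma-\lambda u^r_k-(p-\lambda)u^{r+w}_k$ is non-decreasing in $r$ (valid because $\lambda>0$ and $p-\lambda=p'-\lambda'>0$ under the standing assumption $\theta<\lambda'<p'$), so the levels passing the test form a prefix $\{1,\dots,r^\star\}$, and telescoping the non-positive/positive marginals on either side of $r^\star$ gives $C_k(r^\star)\leq C_k(r)$ for all $r$. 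What your route buys: it is self-contained (no comparison against an unknown optimum) and it handles ties cleanly---the paper asserts with \emph{strict} inequality that reserving an $(r^+_k+1)$-th VM ``increases the cost,'' which is only safe if $r^+_k$ is taken to be the largest optimal count, whereas your $r^\star$, defined as the last level with non-positive marginal, needs no such care. What the paper's route buys: it works purely with pairwise comparisons of allocation patterns and never needs the explicit objective. One shared soft spot: both arguments lean on the claim that the cost of the allocation rule in Remark \ref{remark:local_proces} decomposes exactly into a sum of per-level costs; you assert this through the closed form, and the paper asserts it in one sentence, formalizing it only later in Lemma \ref{lemma:objective_function:seperated_cost}. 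Your write-up would be airtight if you verified the closed form in one line, e.g.\ via $\min\{(d_t-r)^+,w\}=\sum_{j=r+1}^{r+w}d^j_t$ and $(d_t-r-w)^+=\sum_{j>r+w}d^j_t$ summed over $t\in I_k$.
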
	
	\begin{proof}
		Firstly,  the cost incurred by an algorithm is equal to the sum of the cost of each level of demand. Hence,  an algorithm that provides the lowest sum of all levels' cost is optimal.
		
		Consider an interval $I_k$, let $r^+_k$ denote the optimal number of reserved VMs for $I_k$. From (\ref{eqn:no_effective_VM}), we have $n_t^+ = r^+_k, \forall t \in I_k$. According to Remark \ref{remark:local_proces}, at $t$ with $n_t >0$, we should allocate requests to reserved VMs first. As a result, with $n_t^+ = r^+_k$, we should allocate $d_t$ to $r_k^+$ reserved VMs first for all $t \in I_k$, i.e., all utilizations $u^l_k$ from $l=1$ to  $r^+_k$ are allocated to reserved VMs without any gaps in between.
		
		Consider level $r^+_k+1$ of demand, reserving one more VM at this level increases the cost within $I_k$. Formally, this is expressed by the following inequality.
		\begin{align}
		\nonumber		 (r^+_k \!\! +1 )\gamma + \lambda \!\!\!\!\! \sum_{j = r^+_k +2}^{r^+_k +w+1} \!\!\!\!\! u^j_k + p \!\!\!\!\!\!\!\! \sum_{j=r^+_k+w+2}^{ d^{\max}}\!\! \!\!\!\!\! \!\!u^j_k >& r^+_k \gamma + \lambda \!\!\!\! \sum_{j = r^+_k+1}^{r^+_k+w} \!\!\!\! u^j_k + p \!\!\!\! \sum_{j = r^+_k + w}^{ d^{\max}} \!\! \!\!\!\! u^j_k ,
		\end{align} 
		which implies
		\begin{align}
		\nonumber	  \gamma > \lambda u^{r^+_k+1}_k + (p-\lambda)u^{r^+_k+w+1}_k.
		\end{align}
		Since $u^l$ is a non-increasing function respect to $l$, 
		\begin{align}
		\gamma > \lambda u^{l}_k + (p-\lambda)u^{l+w}_k, \quad \forall l > r^+_k. \label{eqn:algorithm_1:opt:no_more_than_l}
		\end{align}	
		Hence, from (\ref{eqn:algorithm_1:opt:no_more_than_l}) and (\ref{eqn:algorithm1:vital_ineq}), we see that Algorithm \ref{algorithm:heu_per_plan} does not reserve at any levels higher than level $r^+_k$.
		
		{Suppose Algorithm \ref{algorithm:heu_per_plan} reserves VMs from levels $l=1$ to $l'$, and $l' < r^+_k$.} Consider level $l'+i, i \in \mathbb{N}$ such that $l'  < l'+i \leq r^+_k $. According to Algorithm \ref{algorithm:heu_per_plan},  {since a VM is not} reserved at level $ {l'}+i$,   {we have}
		\begin{align}
		\nonumber	\gamma > \lambda u^{l'+i}_k + (p-\lambda)u^{l'+i+w}_k.
		\end{align} 
		However, the optimal solution reserves at level $l$, which implies
		\begin{align}
		\nonumber	\gamma \leq \lambda u^{r^+_k}_k + (p-\lambda)u^{r^+_k+w}_k.
		\end{align} 
		{Thus,}
		\begin{align}
		\lambda u^{l'+i}_k + (p-\lambda)u^{l'+i+w}_k < \lambda u^{r^+_k}_k + (p-\lambda)u^{r^+_k+w}_k. \label{eqn:inequality1}
		\end{align}
		However, since $u^l_k$ is a non-increasing function with respect to $l$, (\ref{eqn:inequality1}) does not hold. As a result, Algorithm \ref{algorithm:heu_per_plan} reserves until reaching level $r^+_k$. The lemma  {is proven}.
	\end{proof}

	Let $c^{\text{Alg 1}}$ denote the cost achieved by Algorithm 1. By applying Lemma \ref{lemma:alg_1_optimal}  {above}, we obtain the following main proposition:
	
		\begin{proposition} \label{proposition:interval-optimal}
			Algorithm \ref{algorithm:heu_per_plan} has 2-approximation ratio, i.e., $c^{\text{Alg 1}} \leq 2c^*$.
		\end{proposition}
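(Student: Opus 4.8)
The plan is to lean on Lemma~\ref{lemma:alg_1_optimal}, which already establishes that Algorithm~\ref{algorithm:heu_per_plan} attains the minimum cost among all interval-aligned schedules, so that $c^{\text{Alg 1}} = \min_{\mathbf{r} \in \mathcal{X}'} c(\mathbf{r})$. Consequently it suffices to exhibit a \emph{single} schedule $\mathbf{r}^f \in \mathcal{X}'$ with $c(\mathbf{r}^f) \le 2c^*$, and then $c^{\text{Alg 1}} \le c(\mathbf{r}^f) \le 2c^*$ follows immediately. I would build such an $\mathbf{r}^f$ from an unrestricted optimal schedule $\mathbf{r}^*$ by a rounding-to-boundaries argument: each VM that $\mathbf{r}^*$ reserves at some slot $t$ is active over the window $[t, t+\tau-1]$, which, being of length $\tau$, intersects at most two of the aligned intervals $I_k$ (the one containing $t$ and its immediate successor). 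I would replace that VM by at most two VMs reserved at the starting slots of the (at most two) intervals its window touches; this produces a schedule in $\mathcal{X}'$.

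The argument then splits into a reservation-cost accounting and an allocation-cost comparison. Since every VM of $\mathbf{r}^*$ is replaced by at most two interval-aligned VMs, the total reservation count at most doubles, so the $\gamma$-term of $\mathbf{r}^f$ is at most twice that of $\mathbf{r}^*$. For the allocation cost, I would prove the capacity-domination claim $n_t^f \ge n_t^*$ for every $t$, which holds because the aligned replacement VMs cover a superset of each original VM's active window; summing over the replaced VMs gives the slotwise bound.

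With capacity domination established, I would invoke the greedy allocation of Remark~\ref{remark:local_proces} and observe that the per-slot allocation cost $\lambda a_t^w + p(d_t - a_t^w - n_t)^+$ is non-increasing in $n_t$, since additional reserved capacity can only move demand off the on-demand and edge VMs onto the (already paid-for) reserved VMs. Hence the combined edge-plus-on-demand cost of $\mathbf{r}^f$ is no larger than that of $\mathbf{r}^*$. Writing $c^* = \gamma R^* + A^*$, where $R^*$ is the reservation count and $A^* \ge 0$ the allocation cost, I would conclude $c(\mathbf{r}^f) \le 2\gamma R^* + A^* \le 2(\gamma R^* + A^*) = 2c^*$.

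The step I expect to be the main obstacle is the careful verification of the capacity-domination claim $n_t^f \ge n_t^*$ together with the monotonicity of the greedy per-slot cost across all three allocation regimes ($d_t \le n_t$, $n_t < d_t \le n_t + w$, and $d_t > n_t + w$). The entire factor-of-two loss is confined to the reservation term, so the crux is to guarantee that replacing an interior reservation by two boundary reservations never increases the allocation cost; the monotonicity argument closes this gap, and getting the boundary bookkeeping precisely right for VMs reserved near the end of an interval, whose windows straddle two consecutive intervals, is where the proof demands the most care.
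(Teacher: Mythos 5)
Your proposal is correct and follows essentially the same route as the paper's proof: construct an interval-aligned schedule $\mathbf{r}^f \in \mathcal{X}'$ from the unrestricted optimum by shifting each reservation to the boundaries of the (at most two) intervals its active window touches, bound the reservation cost by a factor of $2$, establish the capacity domination $n^f_t \geq n^*_t$ so that the greedy allocation of Remark~\ref{remark:local_proces} yields no larger edge and on-demand costs, and finish via Lemma~\ref{lemma:alg_1_optimal}. The only cosmetic difference is that the paper's $\mathbf{r}^f$ places the sum of \emph{all} optimal reservations from intervals $k-1$ and $k$ at the start of interval $k$, whereas your per-VM rounding copies only those reservations whose windows actually straddle the boundary; both give the same factor-$2$ bound by the same argument.
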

		\begin{proof}
			Let $\mathrm{OPT}$ denote the optimal solution where $\mathbf{r}^* = \{r_1^*, \ldots, r_T^* \}$ is the optimal reservation of $\mathscr P_1$. There exists  {a solution $\mathbf{r} ^f \in \mathcal{X}'$, whose reservation decisions $r^f_t$} at $t = (k-1)\tau +1$ are as follows: 
				\begin{align}
				r^f_{(k-1)\tau +1}  =\begin{cases}
				\sum_{i = 1}^{\tau} r^*_i, &\text{if}~~ k= 1,\\
				\sum_{i = (k-2)\tau +1}^{k\tau} r^*_i, &\text{if}~~ k=2,\ldots,K.
				\end{cases}
				\end{align}
			 {We note that $r^f_{(k-1)\tau+1}$ is the sum of the
			optimal reservations in the previous interval $[(k-2)\tau +1, (k-1)\tau ]$ and the current interval $[(k-1)\tau +1, k\tau ]$ when $k >1$.} Then, we have
			\begin{align}
			\sum_{t=1}^{T} r^f_t \leq 2 \sum_{t=1}^{T} r^*_t. \label{eqn:algorithm1:prop1:reserved_VM}
			\end{align}
			
			Let $n_t^*$ and $n^{f}_t$  denote the numbers of remote reserved VMs
				that remain effective at time $t$ of the optimal strategy and $\mathbf{r} ^f$, respectively. Now, we compare $n^{f}_t$ and $n_t^*$. For $t \in \{1, \ldots , \tau \}$, we have
				\begin{align}
				\nonumber n^*_t &= \sum_{i=1}^{t} r^*_i,\\
				\nonumber n^f_t &= \sum_{i=1}^{t} r^f_i \stackrel{\{a\}}{=} r^f_1 = \sum_{i=1}^{\tau} r^*_i,
				\end{align}
				where  {$\stackrel{\{a\}}{=}$} above is because $r_i^f = 0$ if $i \neq 1$. Hence, $n^f_t \geq n^*_t$, when $t \in \{1, \ldots , \tau \}$. For $t = (k-1)\tau +j$, where $j \in \{1,\ldots, \tau \}$ and $k >1$, we have
				\begin{align}
				\nonumber n^*_{(k-1)\tau + j} & = \sum_{i=(k-2)\tau + j  +1}^{(k-1)\tau+j} r^*_i,\\
				\nonumber n^f_{(k-1)\tau + j} & = \sum_{i=(k-2)\tau + j +1}^{(k-1)\tau +j} r^f_i \stackrel{\{b\}}{=} r^f_{(k-1)\tau + 1} = \sum_{i=(k-2)\tau + 1}^{k\tau} r^*_i,
				\end{align}			
				where  {$\stackrel{\{b\}}{=}$} above is because $r_t^f = 0$ if $t \neq (k-1)\tau +1$. For any $j \in \{1,\ldots, \tau \}$, we have $(k-2)\tau + 1 \leq (k-2)\tau + j  +1$ and $k\tau \geq (k-1)\tau +j$. Hence we have
				\begin{align}
				\nonumber	n^{f}_t \geq n_t^*,~~\text{for}~~t = (k-1)\tau +j, j \in \{1,\ldots, \tau \}.
				\end{align}	Hence, 
			\begin{align}
			n^{f}_t \geq n_t^*, \quad \forall t. \label{eqn:algorithm1:prop1:effective_VM}
			\end{align}
			From (\ref{eqn:algorithm1:prop1:effective_VM}), according to Remark \ref{remark:local_proces}, for both OPT and Algorithm \ref{algorithm:heu_per_plan}, given the same demand $\mathbf{d}$, they both allocate requests to remote reserved VMs first, local VMs second, and then on-demand VM last. Therefore, we achieve
			\begin{align}
			a_t^{w^{f}} \leq a_t^{w^*}, \label{eqn:algorithm1:prop1:local_assign}
			\end{align}
			and
			\begin{align}
			a^{o^f}_t = (d_t - a_t^{w^{f}} - n_t^{f})^+ \leq (d_t - a_t^{w^*} - n_t^*)^+ = a^{o^*}_t, \forall t. \label{eqn:algorithm1:prop1:ondemand_assign}
			\end{align}			
			
			Let $c^*$ and $c^{f}$ be the   objective values of (\ref{eqn:new_obj_func}) of the optimal solution and $\mathbf{r} ^f$, respectively. From (\ref{eqn:algorithm1:prop1:reserved_VM}), (\ref{eqn:algorithm1:prop1:local_assign}), and (\ref{eqn:algorithm1:prop1:ondemand_assign}), we have
			\begin{align}
			\nonumber	 c^{f} &= \gamma \sum_{t=1}^{T} r_t^{f} + p \sum_{t=1}^{T} (d_t - a_t^{w^f} - n_t^{w^f})^+ + \lambda \sum_{t=1}^{T} a_t^{w^f} \\
			\nonumber	 & \leq 2\gamma \sum_{t=1}^{T} r_t^* + p \sum_{t=1}^{T} (d_t - a_t^{w^*} - n_t^*)^+ + \lambda \sum_{t=1}^{T} a_t^{w^*} \\
			\nonumber	 & \leq 2 c^*.
			\end{align}
			According to Lemma \ref{lemma:alg_1_optimal}, since $\mathbf{r}^f \in \mathcal{X}'$, we have $c^{\text{Alg 1}} \leq c^f$. Hence, $c^{\text{Alg 1}} \leq c^f \leq 2c^*$. This proves the proposition.
		\end{proof}
		{We note that although the approximation ratio that Algorithm \ref{algorithm:heu_per_plan} achieves is similar to that of the  offline algorithm proposed in \cite{WangBroker2015}, the proof of Proposition \ref{algorithm:heu_per_plan} is substantially different and utilizes Remark \ref{remark:local_proces} with the assumption of $\theta \leq \lambda' \leq p'$}. 
		\section{Online Resource Procurement  {and Allocation} Algorithm} \label{section:online}
		
		In this section, we consider an online strategy without any prior knowledge about the future demand. We keep track  {of} the past demand of users and make decision at each time slot $t$  {after the current demand arrives}.
		\subsection{Algorithm Description}
		Inspired by Algorithm \ref{algorithm:heu_per_plan} and Proposition \ref{proposition:interval-optimal}, we again divide the  {time axis} into  intervals of length $\tau$ timeslots. Within any of such interval $I_k$, at each time slot $t\in I_k$, i.e.,  $t = (k-1) \tau + i$ for $i \in  \{1,\dots,\tau \}$, we dynamically update the sequence of reservation decisions $\{r_t\}$ from the current time slot to the end of the interval. Note that since  the edge node makes irrevocable reservations, at each time slot $t \in I_k$, we can only update $\{r_{t'} \}$ for $t' \in \{t, \ldots, k\tau \}$. In addition, the value of $r_{t}$ can only increase or remain unchanged, as we make new reservation decisions in each timeslot. 
		
		Our decision is based on the history of demand $d_t$, the number of previously added reserved VMs $r_t$, and the number of remaining active reserved VMs $n_{t'}$ for $t' \in \{t,\ldots, t+\tau -1 \}$. Similarly to how the offline Algorithm \ref{algorithm:heu_per_plan} uses (\ref{eqn:algorithm1:vital_ineq}), in the online algorithm, the edge node will  reserve a VM at level $l$ if
		\begin{align}
		\gamma \leq \lambda \sum_{i = (k-1)\tau + 1}^{t} d_i^l + (p - \lambda) \sum_{i = (k-1)\tau + 1}^{t} d_i^{l+w}. \label{eqn:online_thres}
		\end{align}
		Then, the VM requests are allocated to the three types of VMs according to Remark \ref{remark:local_proces}. Note that (\ref{eqn:algorithm1:vital_ineq}) suggests the edge node should reserve at level $l$ if the gain of reserving one more VM is higher than its upfront cost. In contrast, (\ref{eqn:online_thres}) suggests the edge node should reserve if the gain of the hypothetical scenario, where the edge node had reserved a VM at the beginning of interval $I_k$, is higher than the upfront cost. 
		
		We further note that in some intervals, if  {a remote reserved VM is procured} at level $l$ while there  {is} no reserved instance at level $l-1$, the reserved  {instance} is assigned  {at} $l-1$. Moreover, while there is already a VM reserved at level $l$, the edge node will postpone to reserve at level $l$ until the VM expires. The resultant algorithm is termed the Online Resource Procurement and Allocation Algorithm (Online RPPA) and is summarized in Algorithm \ref{algorithm:online-deter}, which is run continuously at each timeslot $t$. 
		
		We note in particular that, for each time $t$, with the knowledge of the number of remaining active reserved VM $n_{t'}$ from $t' = 1$ to $t+\tau -1$, if (\ref{eqn:online_thres}) is satisfied, then, from lines $4$ to $8$ of Algorithm \ref{algorithm:online-deter}, we inspect the number of reservations from the current time $t$ to the end of $I_k$, and then we reserve a VM at level $l$ only if there exists a $t'$ such that $n_{t'} < l$. Under this procedure, we reserve at most one VM within $I_k$ at any given level $l$, to avoid  redundant reservations.
		
		 {Since $T$ goes to infinity, we consider the complexity of Algorithm \ref{algorithm:online-deter} in a single time slot. Assuming that $d^{\max}$ is given, it has $O(d^{\max}\tau)$ time complexity and $O(\tau)$ space complexity.}
		
		\begin{algorithm}    [t]                
			\caption{Online Resource Procurement  {and Allocation} Algorithm (OnlineRPAA)}          
			\label{algorithm:online-deter}                           
			{
				\begin{algorithmic}[1]                    
					\Require demand $d_t$, corresponding $I_k$, previously added reserved VM $r_{t'}$ for $t' \in \{t, \ldots, k\tau \}$ and the number of remaining active reserved VM $n_{t'}$ for $t' \in \{t,\ldots, t+\tau-1 \}$, the pricing structure's hyper-parameters $\gamma$, $p$, $\lambda$
					\Ensure updated reservation decisions $r_t'$ with $t' \in \{t,\ldots, k\tau\}$, allocation decisions $a^r_t$, $a^o_t$ and $a^w_t$
					\For  {$l=1$ to $d_t$}
					\If {$\gamma \leq \lambda \sum_{i = (k-1)\tau + 1}^{t} d_i^l + (p - \lambda) \sum_{i = (k-1)\tau + 1}^{t} d_i^{l+w}$}
					\For {$t' = t$ to $k \tau$}
					\If {$n_{t'} < l$}
					\State $r_{t'} \leftarrow r_{t'} +1.$
					\EndIf
					\EndFor
					\EndIf
					\EndFor
					\State {The VM requests are assigned as follows:
						\begin{align}
						\nonumber	& a_t^r = \min \{n_t , d_t \}\\
						\nonumber 	&	a_t^w = \begin{cases}
						(d_t - n_t)^+, \quad &\text{ {if} $0 \leq d_t -n_t \leq w$}\\
						w, \quad &\text{ {if} $d_t - n_t > w$}
						\end{cases}\\
						\nonumber & a_t^o = (d_t - a_t^w - n_t)^+.
						\end{align}
					}
				\end{algorithmic}}
			\end{algorithm}
			\subsection{Performance Guarantee}
			
			Among the intervals of $\tau$ timeslots defined above, let $\mathcal{I}_{\text{cheap}}$ denote the set of intervals  in which Algorithm \ref{algorithm:heu_per_plan} does not reserve any remote VMs, and  let $\mathcal{I}_{\text{expensive}}$ denote the set of intervals  in which Algorithm \ref{algorithm:heu_per_plan} reserves at least one remote VM. Recall that the objective value of (\ref{eqn:new_obj_func}) achieved by Algorithm \ref{algorithm:heu_per_plan} is denoted by $c^{\text{Alg 1}}$. 	We further let  $c^{\text{Alg 2}}$ denote the part of the objective value of (\ref{eqn:new_obj_func}) resulting from Algorithm \ref{algorithm:online-deter}.	We also  {let $c_{k}$ denote} the objective values of (\ref{eqn:new_obj_func}) within $I_k$
			\begin{align}
			c_{k} =  \gamma \sum_{t \in I_k} r_t + \lambda \sum_{t \in I_k} a_t^w  + p \sum_{t \in I_k} (d_t - a_t^w - n_t)^+ .  \label{eqn:new_obj_func:I_k}
			\end{align}		
			Since $I_k$ are non-overlapping intervals, $c^{\text{Alg 1}} = \sum_{k} c^{\text{Alg 1}}_k$ and $c^{\text{Alg 2}} = \sum_k c^{\text{Alg 2}}_k$.
			
			\begin{lemma}
				Within $I_k \in \mathcal{I}_{\text{cheap}}$, $c^{\text{Alg 2}}_k \leq c^{\text{Alg 1}}_k$. \label{lemma:cheap_phase}
			\end{lemma}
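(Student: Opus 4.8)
The plan is to show that in a cheap interval neither algorithm reserves \emph{within} the interval, and then to compare the residual allocation costs, exploiting that Algorithm \ref{algorithm:online-deter} can only have \emph{more} reserved capacity available than Algorithm \ref{algorithm:heu_per_plan}. First I would characterize membership in $\mathcal{I}_{\text{cheap}}$: since $u^l_k$ is non-increasing in $l$ and $\lambda, p-\lambda \ge 0$, the right-hand side of the offline test (\ref{eqn:algorithm1:vital_ineq}) is non-increasing in $l$, so $I_k \in \mathcal{I}_{\text{cheap}}$ exactly when $\gamma > \lambda u^l_k + (p-\lambda) u^{l+w}_k$ for every level $l$. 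The key observation is that the online test (\ref{eqn:online_thres}) uses only the partial utilization $\sum_{i=(k-1)\tau+1}^{t} d_i^l$ accumulated up to the current slot $t \in I_k$, which is bounded above by the full-interval utilization $u^l_k = \sum_{i \in I_k} d_i^l$. Hence for every $t \in I_k$ and every level $l$,
\begin{align}
\nonumber \lambda \sum_{i=(k-1)\tau+1}^{t} d_i^l + (p-\lambda)\sum_{i=(k-1)\tau+1}^{t} d_i^{l+w} \le \lambda u^l_k + (p-\lambda) u^{l+w}_k < \gamma,
\end{align}
so the online reservation condition is never met and Algorithm \ref{algorithm:online-deter} makes no reservation inside $I_k$. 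Thus $\sum_{t \in I_k} r_t = 0$ under both algorithms and the reservation term of (\ref{eqn:new_obj_func:I_k}) vanishes for each.

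Next I would compare the remaining edge-plus-on-demand cost slot by slot. For Algorithm \ref{algorithm:heu_per_plan}, reservations are placed only at interval boundaries and each reserved VM is active for exactly $\tau$ slots, so a VM reserved at the start of $I_k$ is active precisely within $I_k$ and none crosses an interval boundary; in a cheap interval this forces $n_t = 0$ for all $t \in I_k$. For Algorithm \ref{algorithm:online-deter}, reservations made in an earlier (expensive) interval may still be active in $I_k$, so its active count is some $n_t^{\text{Alg 2}} \ge 0$ whose upfront cost was already charged to that earlier interval and so does not appear in $c^{\text{Alg 2}}_k$. Writing the per-slot allocation cost implied by Remark \ref{remark:local_proces} as
\begin{align}
\nonumber g_t(n) = \lambda \min\{(d_t-n)^+, w\} + p\,\bigl(d_t - \min\{(d_t-n)^+,w\} - n\bigr)^+ ,
\end{align}
I would verify that $g_t$ is non-increasing in $n$: increasing $n$ by one unit removes a unit of demand from the on-demand tier (saving $p$) while $n < d_t - w$, from the edge tier (saving $\lambda$) while $d_t - w \le n < d_t$, and has no effect once $n \ge d_t$; every saving is nonnegative because $0 < \lambda < p$ by Remark \ref{remark:brief_analysis}. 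Since $n_t^{\text{Alg 2}} \ge 0 = n_t^{\text{Alg 1}}$ for every $t \in I_k$, monotonicity gives $g_t(n_t^{\text{Alg 2}}) \le g_t(0)$, and summing over $t \in I_k$ yields $c^{\text{Alg 2}}_k \le c^{\text{Alg 1}}_k$.

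The partial-sum bound and the monotonicity of $g_t$ are routine; the one point that demands care is the accounting of reserved VMs that carry over into $I_k$ from a previous interval under the online algorithm. The argument hinges on the fact that such carried-over VMs can only help the online cost within $I_k$, by raising $n_t$, while their upfront cost $\gamma$ is borne entirely by the earlier interval in which they were reserved and therefore never inflates $c^{\text{Alg 2}}_k$. Making this charging convention explicit, together with the observation that $n_t^{\text{Alg 1}} = 0$ is the smallest possible value of $n_t$, is what closes the inequality.
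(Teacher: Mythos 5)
Your proposal is correct and follows essentially the same route as the paper's proof: both establish that neither algorithm reserves within a cheap interval (the online test's partial sums being dominated by the full-interval utilizations $u^l_k$), then exploit that carried-over reservations from a prior expensive interval give $n_t^{\text{Alg 2}} \geq n_t^{\text{Alg 1}} = 0$ at no cost charged to $I_k$, so the edge-plus-on-demand cost can only decrease. Your explicit monotone function $g_t(n)$ merely spells out what the paper handles by reference to the inequalities (\ref{eqn:algorithm1:prop1:local_assign}) and (\ref{eqn:algorithm1:prop1:ondemand_assign}) in Proposition \ref{proposition:interval-optimal}.
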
 
			
			\begin{proof}
				Algorithm \ref{algorithm:heu_per_plan} does not reserve any VMs within any $I_k \in \mathcal{I}_{\text{cheap}}$, i.e., $r_t^{\text{Alg 1}} = n_t^{\text{Alg 1}} = 0, \forall t \in I_k$. From  (\ref{eqn:online_thres}), Algorithm \ref{algorithm:online-deter} also does not reserve any VM within $I_k$, i.e., $r_t^{\text{Alg 2}} = 0, \forall t \in I_k$, which is equivalent to
				\begin{align}
				r_t^{\text{Alg 1}} = r_t^{\text{Alg 2}} = 0, \forall t \in I_k. \label{eqn:lemma:I_cheap:reserve_decision}
				\end{align}
				However, Algorithm \ref{algorithm:online-deter} can reserve at any time $t$, not just at the beginning of each $I_k$. Therefore, there may be some remaining active reserved VMs from the previous interval (which necessarily is in $\mathcal{I}_{\text{expensive}}$), i.e., $n_t^{\text{Alg 2}} \geq n_t^{\text{Alg 1}}$. Hence, similar to how we obtain (\ref{eqn:algorithm1:prop1:local_assign}) and (\ref{eqn:algorithm1:prop1:ondemand_assign})  from (\ref{eqn:algorithm1:prop1:effective_VM}), we have, $\forall t \in I_k$,
				\begin{align}
				\nonumber 	a_t^{w^{\text{Alg 2}}} &\leq a_t^{w^{\text{Alg 1}}},\\
				\nonumber 	a^{o^{\text{Alg 2}}}_t  & = (d_t - a_t^{w^{{\text{Alg 2}}}} - n_t^{{\text{Alg 2}}})^+ \\
				& \leq (d_t - a_t^{w^{\text{Alg 1}}} - n_t^{\text{Alg 1}})^+ = a^{o^{\text{Alg 1}}}_t. \label{eqn:lemma:I_cheap:local_ondemand_allocation}
				\end{align}
				Moreover, by substituting (\ref{eqn:lemma:I_cheap:reserve_decision}) and (\ref{eqn:lemma:I_cheap:local_ondemand_allocation}) into (\ref{eqn:objective_value:I_k}), we have 
				\begin{align}
				\nonumber &	p \sum_{t \in I_k} (d_t - a_t^{w^{\text{Alg 2}}} - n_t^{\text{Alg 2}})^+ + \lambda \sum_{t \in I_k}  a_t^{w^{\text{Alg 2}}}  \\  
				\nonumber		  \leq &  p \sum_{t \in I_k} (d_t - a_t^{w^{\text{Alg 1}}} - n_t^{\text{Alg 1}})^+ + \lambda \sum_{t \in I_k} a_t^{w^{\text{Alg 1}}},
				\end{align}
				which is equivalent to
				\begin{align}
				c^{\text{Alg 2}}_k \leq c^{\text{Alg 1}}_k.
				\end{align}
			\end{proof}

			Next, we consider an arbitrary $I_k \in \mathcal{I}_{\text{expensive}}$. For 
			this case, we need to define the cost of a reservation strategy (i.e., 
			the objective values of (\ref{eqn:new_obj_func}) for either Algorithm \ref{algorithm:heu_per_plan} or Algorithm \ref{algorithm:online-deter}) at 
			level $l$ within interval $I_k$ as follows:
			
				\begin{align}
				\nonumber	c_{k,l} = & \sum_{t \in I_k} \Big[ \gamma \mathbb{I} ( r_t \geq l ) + d_t^l \Big( \lambda \mathbb{I}( n_t < l \leq n_t +w )   \nonumber \\
				& +   p \mathbb{I} (l > n_t +w ) \Big)  \Big]\\
				= &\gamma  \sum_{t \in I_k}  \mathbb{I} ( r_t \geq l )  + \lambda \sum_{t \in I_k}d_t^l  \mathbb{I}( n_t < l \leq n_t +w ) \nonumber \\ 
				& {+} p \sum_{t \in I_k} d_t^l  \mathbb{I} (l > n_t +w ),
				\label{eqn:cost:each_lvl}
				\end{align}			
				where $\mathbb{I} (\cdot)$ is the indicator function.	
			
			The following lemma indicates that this definition of $c_{k,l}$ properly 
			separates the total cost by demand levels:
 		
			\begin{lemma} \label{lemma:objective_function:seperated_cost}
				In any interval $I_k$, $c_k = \sum_l c_{k,l}$.
			\end{lemma}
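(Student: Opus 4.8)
The plan is to establish the identity term by term, matching the three cost components of $c_k$ in (\ref{eqn:new_obj_func:I_k})---reservation, local processing, and on-demand---against the three corresponding sums appearing inside $c_{k,l}$ in (\ref{eqn:cost:each_lvl}), and then interchanging the order of summation via $\sum_l \sum_{t \in I_k} = \sum_{t \in I_k} \sum_l$. The central observation is that, for a fixed time slot $t$, each per-slot quantity ($r_t$, the local allocation $a_t^w$, and the on-demand amount $(d_t - a_t^w - n_t)^+$) can be reconstructed by counting how many demand levels $l$ lie in the appropriate range, weighted by whether the demand reaches that level, i.e. by the indicator $d_t^l$.

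First I would treat the reservation term. Since $r_t$ is a non-negative integer, the levels satisfying $\mathbb{I}(r_t \geq l)$ are exactly $l = 1, \ldots, r_t$, so $\sum_l \mathbb{I}(r_t \geq l) = r_t$; multiplying by $\gamma$ and summing over $t \in I_k$ recovers $\gamma \sum_{t \in I_k} r_t$ exactly. Next, for the local-processing term, I would fix $t$ and count the levels with $d_t^l \, \mathbb{I}(n_t < l \leq n_t + w) = 1$. These are precisely $l \in \{n_t+1, \ldots, \min(n_t+w, d_t)\}$, so their number equals $\min\!\big(w, (d_t - n_t)^+\big)$, which by a three-way split (the cases $d_t \leq n_t$, $n_t < d_t \leq n_t + w$, and $d_t > n_t + w$) coincides exactly with the piecewise definition of $a_t^w$ in (\ref{eqn:allocation_decision:edge_VM}). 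The on-demand term is handled analogously: the levels with $d_t^l \, \mathbb{I}(l > n_t + w) = 1$ are $l \in \{n_t + w + 1, \ldots, d_t\}$, numbering $(d_t - n_t - w)^+$, and a short case check substituting the two relevant branches of $a_t^w$ shows this matches $(d_t - a_t^w - n_t)^+$.

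The only real obstacle is the bookkeeping of these case analyses: one must verify that the boundary behaviour at $d_t = n_t$ and $d_t = n_t + w$ is consistent between the level-counting description and the closed-form allocation dictated by Remark~\ref{remark:local_proces} and (\ref{eqn:allocation_decision:ondemand_VM}), so that no level is double-counted or omitted at the thresholds. Once these three per-slot identities are in hand, I would sum each over all levels $l$ and all $t \in I_k$, reassemble the three groups, and conclude $\sum_l c_{k,l} = c_k$, which proves the lemma.
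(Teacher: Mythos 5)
Your proposal is correct and follows essentially the same route as the paper's proof: both arguments match the three cost components term by term, use $\sum_l \mathbb{I}(r_t \geq l) = r_t$ for the reservation part, and recover $a_t^w$ and $(d_t - a_t^w - n_t)^+$ per time slot by counting which demand levels activate each indicator, with the same case split at $d_t \leq n_t$, $n_t < d_t \leq n_t + w$, and $d_t > n_t + w$. The paper simply carries out this counting inside the sums over $t \in I_k$ rather than isolating per-slot identities first, which is a presentational difference only.
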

		
		\begin{proof}
				Consider the first element of (\ref{eqn:cost:each_lvl}) and (\ref{eqn:new_obj_func:I_k}), we have
				\begin{align}
				\nonumber	& \sum_l	\gamma  \sum_{t \in I_k}  \mathbb{I} ( r_t \geq l )\\
				\nonumber  =& \gamma \sum_l \Big (\mathbb{I} (r_{(k-1)\tau +1}\geq l ) + \ldots +  \mathbb{I} (r_{k\tau}\geq l)    \Big)\\
				\nonumber	=& \gamma \Big ( \sum_l \mathbb{I} (r_{(k-1)\tau +1}\geq l ) + \ldots    + \sum_l \mathbb{I} (r_{k\tau}\geq l)  \Big) \\
				\nonumber	=& \gamma \Big ( r_{(k-1)\tau +1}\ + \ldots + r_{k\tau}  \Big) \\
				=&  \gamma \sum_{t \in I_k} r_t. \label{eqn:obj_func:I_k:reserved}
				\end{align}
				Consider the second element of (\ref{eqn:cost:each_lvl}) and (\ref{eqn:new_obj_func:I_k}), we have	
				\begin{align}
				\nonumber	&\sum_l	\lambda \sum_{t \in I_k}d_t^l  \mathbb{I}( n_t < l \leq n_t +w )\\
				\nonumber  =& \lambda \sum_{t \in I_k} \Big [ (d_t - n_t)  \mathbb{I} ( n_t < d_t \leq n_t +w )  + w  \mathbb{I} ( d_t > n_t +w ) \Big]\\
				= & \lambda \sum_{t \in I_k} a^w_t. \label{eqn:obj_func:I_k:loca}
				\end{align}
				Consider the last element of (\ref{eqn:cost:each_lvl}) and (\ref{eqn:new_obj_func:I_k}), we have
				\begin{align}
				 &\sum_l	p \sum_{t \in I_k} d_t^l  \mathbb{I} (l > n_t +w ) \nonumber  \\
				 =&  p \sum_{t \in I_k} (d_t - w - n_t) 
				\mathbb{I} ( d_t > n_t +w ) . \label{eqn:obj_func:I_k:ondemand_a}
				\end{align}	
				Moreover, when $n_t < l \leq n_t +w $, we have
				\begin{align}
				\nonumber		a^w_t &= d_t -n_t .
				\end{align}
				Hence, 
				\begin{align}
				\nonumber	& p \sum_{t \in I_k} (d_t - w - n_t) \mathbb{I} ( d_t > n_t +w )\\
				\nonumber	 =&  p \sum_{t \in I_k} (d_t - w - n_t) \mathbb{I} ( d_t > n_t +w )\\
				\nonumber &+ p \sum_{t \in I_k} (d_t - a_t^w - n_t)  \mathbb{I}( n_t < d_t \leq n_t +w )\\
				= & p \sum_{t \in I_k} (d_t - a_t^w - n_t) ^+.\label{eqn:obj_func:I_k:ondemand_b}
				\end{align}
				From (\ref{eqn:obj_func:I_k:reserved}), (\ref{eqn:obj_func:I_k:loca}), (\ref{eqn:obj_func:I_k:ondemand_a}), and (\ref{eqn:obj_func:I_k:ondemand_b}), we have $c_{k} = \sum_l c_{k,l}$.		 
		\end{proof}

			Let $c^{\text{Alg 1}}_{k,l}$ and $c^{\text{Alg 2}}_{k,l}$ be the objective values of (\ref{eqn:new_obj_func}) for Algorithm \ref{algorithm:heu_per_plan} and Algorithm \ref{algorithm:online-deter}, respectively. Then from Lemma \ref{lemma:objective_function:seperated_cost}, we have 
			$c^{\text{Alg 1}}_{k} = \sum_l c^{\text{Alg 1}}_{k,l}$ and $c^{\text{Alg 2}}_{k} = \sum_l c^{\text{Alg 2}}_{k,l}$. In the next two lemmas, we 
			compare $c^{\text{Alg 1}}_{k,l}$ and $c^{\text{Alg 2}}_{k,l}$ for two difference cases of 
			the value of level $l$, which are then combined in Lemma \ref{lemma:expensive_phase} to provide a 
			bound on the ratio between $c^{\text{Alg 1}}_{k}$ and $c^{\text{Alg 2}}_{k}$.
			
			\begin{lemma}
				Within $I_k \in \mathcal{I}_{\text{expensive}}$, let $l_r$ be the number of VMs that Algorithm \ref{algorithm:heu_per_plan} reserves in the first time slot of $I_k$.  For $l \in \{1, \ldots, l_r \}$, $c^{\text{Alg 2}}_{k,l} \leq   {3} c^{\text{Alg 1}}_{k,l}$. \label{lemma:ex_phase:lr}
			\end{lemma}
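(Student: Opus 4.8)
The plan is to evaluate both sides of the claimed inequality directly from the per-level cost expression (\ref{eqn:cost:each_lvl}) and then bound the online cost by a three-term decomposition, each term controlled by $\gamma$. First I would pin down the right-hand side. Since $I_k \in \mathcal{I}_{\text{expensive}}$, Algorithm \ref{algorithm:heu_per_plan} reserves all $l_r$ of its VMs in the first slot of $I_k$, so $n_t = l_r$ for every $t \in I_k$. Substituting $r_{(k-1)\tau+1}=l_r$ and $n_t=l_r$ into (\ref{eqn:cost:each_lvl}) at any level $l \leq l_r$, both the edge indicator $\mathbb{I}(n_t < l \leq n_t+w)$ and the on-demand indicator $\mathbb{I}(l > n_t+w)$ vanish because $l \leq l_r = n_t$; only the reservation term survives, giving $c^{\text{Alg 1}}_{k,l} = \gamma$. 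The goal is therefore to prove $c^{\text{Alg 2}}_{k,l} \leq 3\gamma$.

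Next I would decompose the online cost $c^{\text{Alg 2}}_{k,l}$ from (\ref{eqn:cost:each_lvl}) into reservation, edge, and on-demand pieces. The reservation piece is at most $\gamma$, since by construction Algorithm \ref{algorithm:online-deter} reserves at most one VM at any fixed level within $I_k$. Let $t^*$ be the first slot in $I_k$ at which Algorithm \ref{algorithm:online-deter} reserves at level $l$. For every slot $i < t^*$ with $d_i^l = 1$, level $l$ is unreserved ($n_i < l$) and hence served either at the edge (cost $\lambda$) or on-demand (cost $p$), while from $t^*$ onward the reserved VM covers level $l$ for the rest of $I_k$. Writing $p = \lambda + (p-\lambda)$, I would split the serving cost into a uniform part $\lambda \sum_{i<t^*} d_i^l$ collected on all serving slots and an excess part $(p-\lambda)$ collected only on the on-demand slots. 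The uniform part is bounded directly by the online rule (\ref{eqn:online_thres}) at level $l$: just before $t^*$ the criterion was not yet met, so $\lambda \sum_{i<t^*} d_i^l + (p-\lambda)\sum_{i<t^*} d_i^{l+w} < \gamma$, and dropping the nonnegative second summand yields $\lambda \sum_{i<t^*} d_i^l < \gamma$.

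The hard part will be controlling the excess on-demand term $(p-\lambda)\sum_{i:\,\text{on-demand}} d_i^l$, which level $l$'s own criterion does not bound: a slot can be on-demand at level $l$ while $d_i^{l+w}=0$, so level $l$'s accumulated gain registers only $\lambda$ there. The resolution I would use is a cross-level charging argument. A slot $i$ is served on-demand at level $l$ only when $l > n_i + w$, i.e. $n_i < l-w$, which by the bottom-up allocation order of Remark \ref{remark:local_proces} forces level $l-w$ to be uncovered at time $i$. Therefore Algorithm \ref{algorithm:online-deter} has not triggered a reservation at level $l-w$ by time $i$, so its criterion (\ref{eqn:online_thres}) instantiated at level $l-w$ — whose second summand is exactly $(p-\lambda)\sum_i d_i^{(l-w)+w} = (p-\lambda)\sum_i d_i^{l}$ — remains unmet, giving $(p-\lambda)\sum_i d_i^{l} < \gamma$ over all such slots. (When $l \leq w$ the case is vacuous, as $l > n_i + w$ is impossible, so there are no on-demand slots at level $l$.)

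Finally I would add the three bounds, reservation $\leq \gamma$, uniform serving $< \gamma$, and excess on-demand serving $< \gamma$, to obtain $c^{\text{Alg 2}}_{k,l} \leq 3\gamma = 3\,c^{\text{Alg 1}}_{k,l}$. As a consistency check I would verify that spillover of reserved VMs from the preceding (necessarily expensive) interval only helps: extra active VMs raise $n_i$, which can only reduce the $n_i$-dependent edge and on-demand serving at level $l$ and may even remove the level-$l$ reservation entirely, so none of the three bounds is weakened; and in the cross-level step, $n_i < l-w$ means level $l-w$ is uncovered even accounting for spillover, so the criterion-not-met conclusion still holds.
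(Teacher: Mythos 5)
Your proof is correct and follows essentially the same route as the paper's: the identity $c^{\text{Alg 1}}_{k,l}=\gamma$, one $\gamma$ charged to the online reservation, a $\gamma$ bound on the $\lambda$-weighted pre-reservation demand from the level-$l$ instance of (\ref{eqn:online_thres}), and a $\gamma$ bound on the on-demand excess from the level-$(l-w)$ instance are exactly the three terms in the paper's argument, which the paper merely organizes as a temporal split at the reservation times $t_{l-w}'$ and $t_l'$ (with a separate, tighter factor-$2$ case for $l\le w$) instead of your price split $p=\lambda+(p-\lambda)$. One caveat, shared with the paper's own proof: bounding the reservation piece by a single $\gamma$ implicitly attributes each reservation to the level whose criterion triggered it, whereas the literal term $\gamma\sum_{t\in I_k}\mathbb{I}(r_t\ge l)$ in (\ref{eqn:cost:each_lvl}) can exceed $\gamma$ at low levels when Algorithm \ref{algorithm:online-deter} reserves single VMs at several distinct time slots of $I_k$; this re-attribution is harmless because it preserves the sum $\sum_l c_{k,l}$, which is all that Lemma \ref{lemma:expensive_phase} and the competitive-ratio proposition ultimately use.
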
			
			\begin{proof}
				Consider any level $l \in \{1,\dots, l_r\}$.  Algorithm \ref{algorithm:heu_per_plan} reserve a VM at level $l$.  From Remark \ref{remark:local_proces}, because we only use the reserved VM at level $l$, we have,
				\begin{align}
					c^{\text{Alg 1}}_{k,l} = \gamma. \label{eqn:lemma2:c_alg1}
				\end{align}
				Furthermore, since (\ref{eqn:algorithm1:vital_ineq}) is satisfied for this level, (\ref{eqn:online_thres}) is satisfied with $t= k\tau$, so that Algorithm \ref{algorithm:online-deter} reserves a VM at level $l$ no later than $t= k\tau$. Let $t_l' \in \{(k-1)\tau + 1,\dots, k\tau\}$ denote the time the edge node reserves at level $l$ in Algorithm \ref{algorithm:online-deter}.

				\begin{itemize}
					\item[1. ] We first consider any level $l$ that is lower than the number of local VMs, i.e., $l \leq w$. Since $l \leq n_t^{\text{Alg 2}} +w, \forall t \in I_k$, from (\ref{eqn:cost:each_lvl}), there is no request allocated to on-demand VMs, i.e.,
					\begin{align}
						\sum_{t \in I_k} d_t^l p \mathbb{I} (l > n_t^{\text{Alg 2}} +w ) = 0. \label{eqn:add:a}
					\end{align}
					After a VM is reserved at $t_l'$, $n_{t}^{\text{Alg 2}} \geq l, \forall t \in \{t_l',\ldots, k\tau \}$. Hence, 
					\begin{align}
						\sum_{t = t_l'}^{k\tau} d_t^l \lambda \mathbb{I}( n_t^{\text{Alg 2}} < l \leq n_t^{\text{Alg 2}} +w ) = 0. \label{eqn:add:b}
					\end{align}
					For $1 \leq t \leq t_l' -1$, 
					\begin{align}
						\lambda \sum_{i=(k-1)\tau + 1}^{t_l'-1} d_t^l  \mathbb{I}( n_t < l \leq n_t +w ) \leq \lambda \sum_{i=(k-1)\tau + 1}^{t_l'-1} d_i^l. \label{eqn:add:d}
					\end{align}
					As a result, by substituting (\ref{eqn:add:a}), (\ref{eqn:add:b}),  and (\ref{eqn:add:d}) into (\ref{eqn:cost:each_lvl}), the cost $c^{\text{Alg 2}}_{k,l}$ are upper-bounded by 
					\begin{align}
						c^{\text{Alg 2}}_{k,l}\leq \lambda \sum_{i=(k-1)\tau + 1}^{t_l'-1} d_i^l + \gamma. \label{eqn:lemma2:1:alg2_inequality}
					\end{align}
					Moreover, since no VM is reserved at level $l$ in $I_k$ before $t_l'$, the condition (\ref{eqn:online_thres}) implies that 
					\begin{align}
						\lambda \sum_{i=(k-1)\tau + 1}^{t_l'-1} d_i^l < \gamma - (p-\lambda) \sum_{i=(k-1)\tau + 1}^{t_l'-1}d_i^{l+w}. \label{eqn:lemma2:1:alg2_inequalityb} 
					\end{align}
					As a result, from (\ref{eqn:lemma2:c_alg1}),  (\ref{eqn:lemma2:1:alg2_inequality}), and (\ref{eqn:lemma2:1:alg2_inequalityb}), we obtain the ratio
					\begin{align}
						\frac{c^{\text{Alg 2}}_{k,l}}{c^{\text{Alg 1}}_{k,l}}\leq   \frac{\gamma + \gamma - (p-\lambda) \sum_{i=(k-1)\tau + 1}^{t_l'-1}d_i^{l+w}}{\gamma} \leq 2. \label{eqn:lemma2:1}
					\end{align}
					\item[2. ] Now we consider level $l > w$.  {Recall that $t_{l-w}'$ denotes the time that the edge node reserves a VM at level  {$l-w$}. After a VM is reserved at $l-w$, $n_t^{\text{Alg 2}} \geq l-w, \forall t \in \{t_{l-w}', \ldots, k\tau \}$. The condition (\ref{eqn:online_thres}) and $d_t^l \leq d_t^{l-1}$ imply $t_{l-w}' \leq t_l'$.} Before $t_{l-w}'$, $n_t^{\text{Alg 2}} < l - w$. Therefore, for $t \in \{(k-1)\tau,\ldots, t_{l-w}' -1 \}$,
					\begin{align}
						\nonumber	\mathbb{I} (l > n_t^{\text{Alg 2}} +w ) = 1,  \\
						\mathbb{I}( n_t^{\text{Alg 2}} < l \leq n_t^{\text{Alg 2}} +w ) = 0. \label{eqn:add:e}
					\end{align}
					For $t \in  \{t_{l-w}',\ldots, t_{l}'-1 \}$, 
					\begin{align}
						\nonumber	\mathbb{I} (l > n_t^{\text{Alg 2}} +w ) = 0,  \\
						\mathbb{I}( n_t^{\text{Alg 2}} < l \leq n_t^{\text{Alg 2}} +w ) \leq 1. \label{eqn:add:f}
					\end{align}
					Here, by substituting (\ref{eqn:add:e}) and (\ref{eqn:add:f}) into (\ref{eqn:cost:each_lvl}), we conclude that $c^{\text{Alg 2}}_{k,l}$ is upper-bounded by
					\begin{align}
						c^{\text{Alg 2}}_{k,l}\leq p \sum_{i=(k-1)\tau + 1}^{t_{l-w}'-1} d_i^l +  \lambda \sum_{i=t_{l-w}'}^{t_l'-1} d_i^l + \gamma. \label{eqn:lemma2:2:alg2_inequality}
					\end{align}
					Consider the first element of the right hand-side of  (\ref{eqn:lemma2:2:alg2_inequality}). The inequality $d_i^{l} \leq d_i^{l-w}$  implies that
					\begin{align}
						p \hspace{-2mm}\sum_{i=(k-1)\tau + 1}^{t_{l-w}'-1} \hspace{-2mm} d_i^l \leq \lambda \hspace{-2mm} \sum_{i = (k-1)\tau + 1}^{t_{l-w}'-1} \hspace{-2mm} d_i^{l-w} + (p - \lambda) \hspace{-4mm}\sum_{i = (k-1)\tau + 1}^{t_{l-w}'-1} \hspace{-2mm} d_i^{l} . \label{eqn:b}
					\end{align}
					Furthermore, (\ref{eqn:online_thres}) implies that
					\begin{align}
						\lambda \hspace{-2mm} \sum_{i = (k-1)\tau + 1}^{t_{l-w}'-1} \hspace{-2mm} d_i^{l-w} + (p - \lambda) \hspace{-4mm}\sum_{i = (k-1)\tau + 1}^{t_{l-w}'-1} \hspace{-2mm} d_i^{l} < \gamma. \label{eqn:add:g}
					\end{align}
					Hence, from (\ref{eqn:b})  and (\ref{eqn:add:g}),
					\begin{align}
						p \hspace{-2mm}\sum_{i=(k-1)\tau + 1}^{t_{l-w}'-1} \hspace{-2mm} d_i^l < \gamma. \label{eqn:add:l}
					\end{align}
					Consider the second element of the right hand-side of  (\ref{eqn:lemma2:2:alg2_inequality}).
					We have
					\begin{align}
						\lambda \!\!\!\! \sum_{i=t_{l-w}'}^{t_l'-1} \!\!\! d_i^l \leq 	\lambda \!\!\!\! \sum_{i=(k-1)\tau + 1}^{t_l'-1} \!\!\! \!\!\!\! d_i^l. \label{eqn:add:h}
					\end{align}
					Moreover, since the edge node only reserves at level $l$ until $t_l'$, (\ref{eqn:online_thres})  implies
					\begin{align}
						\sum_{i=(k-1)\tau + 1}^{t_l'-1} \!\!\! \!\!\!\! d_i^l <
						\gamma - (p-\lambda) \!\!\!\!\!\!\!\sum_{i=(k-1)\tau + 1}^{t_l'-1} \!\!\!\!\!\!\! d_i^{l+w}. \label{eqn:add:i}
					\end{align}
					Since  $\sum_{i=(k-1)\tau + 1}^{t_l'-1}d_i^{l+w} \geq 0$, we have
					\begin{align}
						\gamma - (p-\lambda) \sum_{i=(k-1)\tau + 1}^{t_l'-1}  d_i^{l+w} \leq \gamma. \label{eqn:add:j}
					\end{align}		
					Hence, from 	(\ref{eqn:add:h}), (\ref{eqn:add:i}), and (\ref{eqn:add:j}), we have
					\begin{align}
						\lambda \!\!\!\! \sum_{i=t_{l-w}'}^{t_l'-1} \!\!\! d_i^l \leq \gamma. \label{eqn:add:k}
					\end{align}
					By substituting (\ref{eqn:add:k}) and (\ref{eqn:add:l}) into (\ref{eqn:lemma2:2:alg2_inequality}), we obtain the ratio
					\begin{align}
						\frac{c^{\text{Alg 2}}_{k,l}}{c^{\text{Alg 1}}_{k,l}}\leq \frac{\gamma + \gamma + \gamma}{\gamma} = 3. \label{eqn:lemma2:2}
					\end{align}
				\end{itemize}
				{As a result, from (\ref{eqn:lemma2:1}) and (\ref{eqn:lemma2:2}), $c^{\text{Alg 2}}_{k,l}\leq 3 c^{\text{Alg 1}}_{k,l}, \forall l \in \{1,\dots, l_r\}$.}
			\end{proof}	
			
			\begin{lemma}
				Within $I_k \in \mathcal{I}_{\text{expensive}}$, let $l_r$ be the number of VMs that Algorithm \ref{algorithm:heu_per_plan} reserves in the first time slot of $I_k$.  For $l \in \{ l_r+1,\ldots,  d^{\max} \} $, $c^{\text{Alg 2}}_{k,l}\leq \frac{p}{\lambda} c^{\text{Alg 1}}_{k,l}$. \label{lemma:expensive_phase_on_demand}
			\end{lemma}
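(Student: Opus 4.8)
The plan is to compare the two per-level costs by squeezing both of them between $\lambda u^l_k$ and $p u^l_k$, where $u^l_k = \sum_{t\in I_k} d_t^l$ is the level-$l$ utilization of the interval. Recall that in the reduced objective (\ref{eqn:new_obj_func}) reserved capacity is free, local processing costs $\lambda$ per unit, and on-demand costs $p$ per unit, with $0 < \lambda < p$ because $\theta < \lambda' < p'$. Once I establish $c^{\text{Alg 1}}_{k,l} \ge \lambda u^l_k$ and $c^{\text{Alg 2}}_{k,l} \le p u^l_k$, the claim follows at once from $c^{\text{Alg 2}}_{k,l} \le p u^l_k = \tfrac{p}{\lambda}\,\lambda u^l_k \le \tfrac{p}{\lambda}\,c^{\text{Alg 1}}_{k,l}$.

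For the lower bound on the offline cost, I would use that inside an expensive interval Algorithm \ref{algorithm:heu_per_plan} reserves exactly $l_r$ VMs at the first slot and these remain active for the whole interval, so $n_t^{\text{Alg 1}}=l_r$ for every $t\in I_k$. For any level $l>l_r$ we then have $l>n_t^{\text{Alg 1}}$, so by the cost decomposition (\ref{eqn:cost:each_lvl}) no reservation charge is attributed to level $l$ and each slot with $d_t^l=1$ is served either locally (cost $\lambda$) or on-demand (cost $p$). Since every active slot contributes at least $\lambda$, summing gives $c^{\text{Alg 1}}_{k,l} \ge \lambda \sum_{t\in I_k} d_t^l = \lambda u^l_k$.

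The crux, and the step I expect to be the main obstacle, is the upper bound on the online cost, which hinges on showing that Algorithm \ref{algorithm:online-deter} never makes a new reservation at a level $l>l_r$ within $I_k$. Because Algorithm \ref{algorithm:heu_per_plan} stops at $l_r$, the monotonicity of $u^l_k$ yields the failed version of (\ref{eqn:algorithm1:vital_ineq}), namely $\gamma > \lambda u^l_k + (p-\lambda)u^{l+w}_k$ for all $l>l_r$, as already observed in (\ref{eqn:algorithm_1:opt:no_more_than_l}). The online test (\ref{eqn:online_thres}) evaluated at any $t\in I_k$ has a right-hand side that is a partial sum, hence bounded above by its value at $t=k\tau$, which is exactly $\lambda u^l_k + (p-\lambda)u^{l+w}_k$; since this maximum already lies strictly below $\gamma$, the online condition can never fire at level $l>l_r$. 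Consequently the $\gamma$-term of $c^{\text{Alg 2}}_{k,l}$ in (\ref{eqn:cost:each_lvl}) vanishes within $I_k$, and each active slot's serving cost is at most $p$, so $c^{\text{Alg 2}}_{k,l} \le p \sum_{t\in I_k} d_t^l = p u^l_k$.

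Two subtleties remain to be checked. First, reserved VMs carried over from the preceding interval may keep $n_t^{\text{Alg 2}}$ large early in $I_k$; I must verify that their reservation fee is charged to the previous interval and is thus absent from the $\gamma$-term of $c_{k,l}$, while their only effect here is to lower the serving cost, keeping the bound $c^{\text{Alg 2}}_{k,l}\le p u^l_k$ intact. Second, the degenerate case $u^l_k=0$ makes both sides zero, so the inequality holds trivially. Assembling the two bounds then delivers $c^{\text{Alg 2}}_{k,l}\le \tfrac{p}{\lambda}\,c^{\text{Alg 1}}_{k,l}$ for all $l\in\{l_r+1,\ldots,d^{\max}\}$.
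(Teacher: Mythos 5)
Your proof is correct and takes essentially the same approach as the paper's: since Algorithm \ref{algorithm:heu_per_plan} stops at level $l_r$, condition (\ref{eqn:algorithm1:vital_ineq}) fails for $l > l_r$, so the online test (\ref{eqn:online_thres}) --- whose right-hand side is a partial sum dominated by the full-interval quantity $\lambda u^l_k + (p-\lambda)u^{l+w}_k$ --- never fires, hence neither algorithm reserves at these levels, and the ratio follows by bounding the offline per-slot cost below by $\lambda$ and the online per-slot cost above by $p$. Your write-up is merely more explicit than the paper's (the squeeze via $u^l_k$, the carried-over reservations whose upfront fee is charged to the previous interval, and the degenerate case $u^l_k = 0$), but the underlying argument is the same.
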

			\begin{proof}
				At any levels at or above  $l_r+1$, Algorithm \ref{algorithm:heu_per_plan} does not reserve a remote VM. Thus, (\ref{eqn:algorithm1:vital_ineq}) is not satisfied. As a result, for any $t \in \{(k-1)\tau + 1,\dots, k\tau\}$, 
				(\ref{eqn:online_thres}) is also not satisfied, which implies that	Algorithm \ref{algorithm:online-deter} does not reserve any VMs. Hence, the cost resulted from Algorithm \ref{algorithm:heu_per_plan} and Algorithm \ref{algorithm:online-deter} is from remote on-demand instances and local processing. Since $\lambda < p$, the cost of Algorithm \ref{algorithm:heu_per_plan} is lower-bounded by the local processing cost, while the cost of Algorithm \ref{algorithm:online-deter} is upper-bounded by the cost of using remote on-demand instances. Thus, the ratio of $\frac{c^{\text{Alg 2}}_{k,l}}{c^{\text{Alg 1}}_{k,l}}$ is upper-bounded  by $\frac{p}{\lambda}$.
			\end{proof}

			\begin{lemma}
				Within $I_k \in \mathcal{I}_{\text{expensive}}$, $c^{\text{Alg 2}}_k \leq \max\{3 ,\frac{p}{\lambda} \} c^{\text{Alg 1}}_k$. \label{lemma:expensive_phase}
			\end{lemma}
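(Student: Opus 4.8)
The plan is to combine the two preceding per-level bounds, Lemma \ref{lemma:ex_phase:lr} and Lemma \ref{lemma:expensive_phase_on_demand}, through the level-wise cost decomposition established in Lemma \ref{lemma:objective_function:seperated_cost}. First I would invoke Lemma \ref{lemma:objective_function:seperated_cost} to write $c^{\text{Alg 1}}_k = \sum_{l=1}^{d^{\max}} c^{\text{Alg 1}}_{k,l}$ and $c^{\text{Alg 2}}_k = \sum_{l=1}^{d^{\max}} c^{\text{Alg 2}}_{k,l}$, so that the comparison reduces to comparing the two algorithms level by level. Since $l_r$ is the number of VMs Algorithm \ref{algorithm:heu_per_plan} reserves in the first slot of $I_k$, the two index ranges $\{1,\dots,l_r\}$ and $\{l_r+1,\dots,d^{\max}\}$ partition all demand levels, which is exactly the partition covered by the two lemmas.

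The core step is then to split the sum at level $l_r$ and apply each per-level bound on its own range:
\begin{align}
\nonumber c^{\text{Alg 2}}_k = \sum_{l=1}^{l_r} c^{\text{Alg 2}}_{k,l} + \sum_{l=l_r+1}^{d^{\max}} c^{\text{Alg 2}}_{k,l} \leq 3 \sum_{l=1}^{l_r} c^{\text{Alg 1}}_{k,l} + \frac{p}{\lambda} \sum_{l=l_r+1}^{d^{\max}} c^{\text{Alg 1}}_{k,l},
\end{align}
where the first sum uses Lemma \ref{lemma:ex_phase:lr} and the second uses Lemma \ref{lemma:expensive_phase_on_demand}. Because every per-level cost $c^{\text{Alg 1}}_{k,l}$ is a sum of the nonnegative terms in (\ref{eqn:cost:each_lvl}) with the nonnegative coefficients $\gamma$, $\lambda$, and $p$, each of the two factors $3$ and $\tfrac{p}{\lambda}$ may be replaced by their maximum without violating the inequality. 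Pulling $\max\{3,\tfrac{p}{\lambda}\}$ out of both sums and recombining them via Lemma \ref{lemma:objective_function:seperated_cost} yields $c^{\text{Alg 2}}_k \leq \max\{3,\tfrac{p}{\lambda}\}\, c^{\text{Alg 1}}_k$, as claimed.

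Since almost all of the analytic difficulty has already been absorbed into Lemmas \ref{lemma:ex_phase:lr} and \ref{lemma:expensive_phase_on_demand}, I do not expect a genuine obstacle here; this is an aggregation argument. The only points that require a moment of care are confirming that the two ranges are disjoint and exhaustive (so no level is double-counted or omitted) and that each $c^{\text{Alg 1}}_{k,l}\ge 0$, which is what legitimizes replacing both individual ratios by their maximum before summing. If one wished to be fully rigorous, one would also note that whenever $c^{\text{Alg 1}}_{k,l}=0$ the corresponding $c^{\text{Alg 2}}_{k,l}$ is likewise zero on that range (as guaranteed by the two lemmas), so the inequality remains valid even at degenerate levels.
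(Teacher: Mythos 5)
Your proposal is correct and matches the paper's own argument: the paper likewise combines Lemma \ref{lemma:objective_function:seperated_cost} with the per-level bounds of Lemmas \ref{lemma:ex_phase:lr} and \ref{lemma:expensive_phase_on_demand}, bounding the ratio $c^{\text{Alg 2}}_k / c^{\text{Alg 1}}_k$ by $\max_l \{ c^{\text{Alg 2}}_{k,l}/c^{\text{Alg 1}}_{k,l} \} = \max\{3, \frac{p}{\lambda}\}$. Your explicit split of the sum at $l_r$ and the remarks on nonnegativity and degenerate levels are just a slightly more careful write-up of the same aggregation step.
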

			\begin{proof}
				From Lemma \ref{lemma:objective_function:seperated_cost}, Lemma  \ref{lemma:ex_phase:lr}, and Lemma \ref{lemma:expensive_phase_on_demand}, for $I_k \in \mathcal{I}_{\text{expensive}}$, we have 
				\begin{align}
				\frac{c^{\text{Alg 2}}_k}{c^{\text{Alg 1}}_k} \leq \max_l \left \{ \frac{c^{\text{Alg 2}}_{k,l}}{c^{\text{Alg 1}}_{k,l}} \right \}  =  \max\{3 ,\frac{p}{\lambda} \}.
				\end{align}
			\end{proof}	
		
		 From Lemma \ref{lemma:expensive_phase}, we obtain the following main proposition on the performance of Algorithm \ref{algorithm:online-deter}:
		
			\begin{proposition}
				Algorithm \ref{algorithm:online-deter} has  {$\max\{6,\frac{2p}{\lambda}\}$} competitive ratio.
			\end{proposition}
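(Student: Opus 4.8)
The plan is to chain together the interval-wise comparisons between Algorithm \ref{algorithm:online-deter} and Algorithm \ref{algorithm:heu_per_plan} that were established in Lemmas \ref{lemma:cheap_phase} and \ref{lemma:expensive_phase} with the offline $2$-approximation guarantee of Proposition \ref{proposition:interval-optimal}. Because the intervals $I_k$ are non-overlapping and together cover the entire time horizon, the total cost of each algorithm decomposes additively, $c^{\text{Alg 2}} = \sum_k c^{\text{Alg 2}}_k$ and $c^{\text{Alg 1}} = \sum_k c^{\text{Alg 1}}_k$, as already noted in the text. First I would partition the interval indices into $\mathcal{I}_{\text{cheap}}$ and $\mathcal{I}_{\text{expensive}}$ and establish a single per-interval bound that is valid in both cases.

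For $I_k \in \mathcal{I}_{\text{expensive}}$, Lemma \ref{lemma:expensive_phase} already supplies the desired bound $c^{\text{Alg 2}}_k \leq \max\{3, \frac{p}{\lambda}\}\, c^{\text{Alg 1}}_k$. For $I_k \in \mathcal{I}_{\text{cheap}}$, Lemma \ref{lemma:cheap_phase} gives only $c^{\text{Alg 2}}_k \leq c^{\text{Alg 1}}_k$, so I would relax this to the same factor by observing that $0 < \lambda < p$ (from $\theta < \lambda' < p'$ in Remark \ref{remark:brief_analysis}), whence $\frac{p}{\lambda} > 1$ and therefore $\max\{3, \frac{p}{\lambda}\} \geq 3 > 1$. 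Consequently the uniform inequality $c^{\text{Alg 2}}_k \leq \max\{3, \frac{p}{\lambda}\}\, c^{\text{Alg 1}}_k$ holds for every $k$.

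Summing this uniform bound over all $k$ yields $c^{\text{Alg 2}} \leq \max\{3, \frac{p}{\lambda}\}\, c^{\text{Alg 1}}$. I would then invoke Proposition \ref{proposition:interval-optimal}, i.e. $c^{\text{Alg 1}} \leq 2 c^*$, to conclude
\begin{align}
\nonumber c^{\text{Alg 2}} \leq \max\left\{3, \frac{p}{\lambda}\right\} c^{\text{Alg 1}} \leq 2 \max\left\{3, \frac{p}{\lambda}\right\} c^* = \max\left\{6, \frac{2p}{\lambda}\right\} c^*,
\end{align}
which, since $c^{\text{Alg 2}}$ is precisely the online cost $c^{\mathrm{On}}$, gives the claimed competitive ratio.

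Little genuine difficulty remains at this stage, since all the analytic work has been carried out in the supporting lemmas; the proposition is essentially a telescoping of those results together with the offline guarantee. The only points that require care are the relaxation of the cheap-interval bound into the common factor, which rests on the standing assumption $0 < \lambda < p$, and the conceptual justification for comparing $c^{\text{Alg 2}}$ against $c^{\text{Alg 1}}$ rather than against $c^*$ directly — this is exactly why the offline $2$-approximation was constructed as an intermediate benchmark, allowing the two multiplicative factors $\max\{3, \frac{p}{\lambda}\}$ and $2$ to be composed cleanly.
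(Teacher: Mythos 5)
Your proposal is correct and follows essentially the same route as the paper: combine the per-interval bounds from Lemma \ref{lemma:cheap_phase} (cheap intervals) and Lemma \ref{lemma:expensive_phase} (expensive intervals) into the uniform factor $\max\{3,\frac{p}{\lambda}\}$ relating $c^{\text{Alg 2}}$ to $c^{\text{Alg 1}}$, then compose with the $2$-approximation of Proposition \ref{proposition:interval-optimal}. The only cosmetic difference is that the paper bounds the ratio of sums by the maximum of the per-interval ratios, whereas you sum a uniform per-interval inequality; these are equivalent manipulations.
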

			\begin{proof}
				We have
				\begin{align}
				\nonumber	\frac{c^{\text{Alg 2}}}{c^{\text{Alg 1}}} & = \frac{\sum_{k \in \mathcal{I}_{\text{cheap}}}c^{\text{Alg 2}}_k + \sum_{k \in \mathcal{I}_{\text{expensive}}}c^{\text{Alg 2}}_k}{\sum_{k \in \mathcal{I}_{\text{cheap}}}c^{\text{Alg 1}}_k + \sum_{k \in \mathcal{I}_{\text{expensive}}}c^{\text{Alg 1}}_k}\\
				\nonumber	& \leq  \max \left \{ \frac{c^{\text{Alg 2}}_k}{c^{\text{Alg 1}}_k} | I_k \in \mathcal{I}_{\text{cheap}}, \frac{c^{\text{Alg 2}}_k}{c^{\text{Alg 1}}_k} | I_k \in \mathcal{I}_{\text{expensive}} \right \}\\
				\nonumber	& =  \max \left \{ 1, \max \{ 3 ,\frac{p}{\lambda} \}   \right   \}\\
				\nonumber	& =  \max\{3 ,\frac{p}{\lambda} \}.
				\end{align}
				From Proposition \ref{proposition:interval-optimal}, Algorithm \ref{algorithm:heu_per_plan} has 2-approximation ratio, i.e., $c^{\text{Alg 1}} \leq 2 c^*$. Hence, $c^{\text{Alg 2}} \leq  {\max\{6,\frac{2p}{\lambda}\}}c^*.$
			\end{proof}
			
			Next, we consider the performance of Algorithm \ref{algorithm:online-deter} in a special case where the local capacity at the edge node is zero. In this case, the system is reduced to the one in \cite{WangBroker2015}. However, we note that Algorithm \ref{algorithm:online-deter}  is different from the online algorithm proposed in \cite{WangBroker2015} because Algorithm \ref{algorithm:online-deter} does not consider past history in the beginning of each interval. It only considers the history within each $I_k$. This is in contrast to \cite{WangBroker2015}, where at any given $t$, the proposed online algorithm always considers historical demand from $t-\tau+1$ to $t$. Nevertheless, as shown in the following, in this case Algorithm 
			\ref{algorithm:online-deter} has the same competitive ratio as the online algorithm proposed in \cite{WangBroker2015}.
			\begin{proposition}
				When there is no edge capacity, i.e., $w=0$, Algorithm \ref{algorithm:online-deter} has $4$ competitive ratio.
			\end{proposition}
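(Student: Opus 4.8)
The plan is to revisit the expensive-interval analysis of Lemmas \ref{lemma:ex_phase:lr}--\ref{lemma:expensive_phase} under the simplification $w=0$ and to observe that both per-level bounds sharpen. First I would note that setting $w=0$ in (\ref{eqn:allocation_decision:edge_VM}) forces $a_t^w = 0$ for all $t$, so the local-processing term disappears entirely and the reservation threshold (\ref{eqn:online_thres}) collapses, via $d_i^{l+w}=d_i^l$, to $\gamma \leq p \sum_{i=(k-1)\tau+1}^{t} d_i^l$. Consequently, demand at each level is served only by a reserved VM or an on-demand VM. The bound for the cheap intervals (Lemma \ref{lemma:cheap_phase}) is unaffected, so I would focus on an arbitrary $I_k \in \mathcal{I}_{\text{expensive}}$.

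For the reserved levels $l \in \{1,\dots,l_r\}$, I would re-run the argument of Lemma \ref{lemma:ex_phase:lr}, but now only its second case is relevant (there are no levels $l \leq w = 0$), and in that case $l-w=l$ forces $t_{l-w}' = t_l'$, so the middle interval $\{t_{l-w}',\dots,t_l'-1\}$ is empty and its $\lambda$-contribution vanishes. What remains is $c^{\text{Alg 2}}_{k,l} \leq p\sum_{i=(k-1)\tau+1}^{t_l'-1} d_i^l + \gamma$, and the collapsed threshold gives $p\sum_{i=(k-1)\tau+1}^{t_l'-1} d_i^l < \gamma$ because no reservation has yet been made at level $l$ before $t_l'$. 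Since $c^{\text{Alg 1}}_{k,l}=\gamma$, this yields the sharpened per-level bound $c^{\text{Alg 2}}_{k,l} < 2\gamma = 2 c^{\text{Alg 1}}_{k,l}$, i.e.\ a factor of $2$ in place of the $3$ of Lemma \ref{lemma:ex_phase:lr}.

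For the levels $l \in \{l_r+1,\dots,d^{\max}\}$, I would reuse the reasoning of Lemma \ref{lemma:expensive_phase_on_demand}: neither algorithm reserves at these levels, so both must serve the corresponding demand. With $w=0$ there is no local-processing option, so both algorithms incur exactly the same on-demand cost $p\sum_{t\in I_k} d_t^l$ at each such level, giving $c^{\text{Alg 2}}_{k,l}=c^{\text{Alg 1}}_{k,l}$, a ratio of $1$ rather than $p/\lambda$. Combining the two cases through Lemma \ref{lemma:objective_function:seperated_cost} exactly as in Lemma \ref{lemma:expensive_phase} gives $c^{\text{Alg 2}}_k \leq \max\{2,1\}\,c^{\text{Alg 1}}_k = 2 c^{\text{Alg 1}}_k$ on every expensive interval, and hence $c^{\text{Alg 2}} \leq 2 c^{\text{Alg 1}}$ overall. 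Finally, applying the $2$-approximation bound $c^{\text{Alg 1}} \leq 2c^*$ of Proposition \ref{proposition:interval-optimal} produces $c^{\text{Alg 2}} \leq 4c^*$.

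The main obstacle I anticipate is bookkeeping rather than conceptual: I must verify cleanly that the threshold (\ref{eqn:online_thres}) really degenerates to $\gamma \leq p\sum d_i^l$ when $w=0$, and that this is precisely what drives the $\lambda$-interval to collapse and the reserved-level ratio down from $3$ to $2$. Care is also needed to confirm that above $l_r$ the two algorithms exhibit identical reservation behaviour (both abstaining) so that their on-demand costs match exactly; this is where the absence of local capacity makes the $p/\lambda$ slack of Lemma \ref{lemma:expensive_phase_on_demand} evaporate.
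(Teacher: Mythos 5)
Your proposal is correct and follows essentially the same route as the paper's own proof: collapse the threshold (\ref{eqn:online_thres}) to $\gamma \leq p\sum_i d_i^l$, keep Lemma \ref{lemma:cheap_phase} for cheap intervals, sharpen the reserved-level bound of Lemma \ref{lemma:ex_phase:lr} from $3$ to $2$ (only the $l>w$ case survives, and the $\lambda$-term vanishes), note equality of costs above $l_r$, and compose with Proposition \ref{proposition:interval-optimal} to get $c^{\text{Alg 2}} \leq 2c^{\text{Alg 1}} \leq 4c^*$. Your bookkeeping with the upper summation limit $t_l'-1$ and the observation $t_{l-w}'=t_l'$ is, if anything, slightly more careful than the paper's.
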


						\begin{proof}
							When $w = 0$, which means $u^{l+w} = u^{l}$, (\ref{eqn:online_thres}) becomes
							\begin{align}
								\gamma \leq p \sum_{i = (k-1)\tau + 1}^{t} d_i^{l}. \label{eqn:online_thres:no_w}
							\end{align}
							With (\ref{eqn:online_thres:no_w}), Lemma \ref{lemma:cheap_phase} still holds for $I_k \in \mathcal{I}_{\text{cheap}}$. 
							
							Let us reconsider Lemma \ref{lemma:ex_phase:lr}, for $I_k \in \mathcal{I}_{\text{expensive}}$ and $l \leq l_r$. Since $w=0$, $l>w$. As a result, (\ref{eqn:lemma2:2:alg2_inequality}) becomes
							\begin{align}
								c^{\text{Alg 2}}_{k,l}\leq p \sum_{i=(k-1)\tau + 1}^{t_l'} d_i^l + \gamma. \label{eqn:add:m}
							\end{align}
							Since there is no reservation at level $l$ until $t_l'$,
							\begin{align}
								p \sum_{i = (k-1)\tau + 1}^{t_l'} d_i^{l} < \gamma . \label{eqn:add:n}
							\end{align}
							Moreover, for $1 \leq l \leq l_r$, Algorithm \ref{algorithm:heu_per_plan} only use reserved VM, i.e., $c^{\text{Alg 1}}_{k,l} = \gamma$. Hence, by substituting (\ref{eqn:add:n}) into (\ref{eqn:add:m}), we have
							\begin{align}
								\frac{c^{\text{Alg 2}}_{k,l}}{c^{\text{Alg 1}}_{k,l}}\leq 2.
							\end{align}

							Let us reconsider Lemma \ref{lemma:expensive_phase_on_demand}, for $I_k \in \mathcal{I}_{\text{expensive}}$ and $l > l_r$. Since both algorithms reserve no VM from $l=l_r +1$	 to $d^{\max}$ and there is no edge processing, the objective values of (\ref{eqn:new_obj_func}) of both algorithms are only resulted from on-demand VMs. Hence, $c^{\text{Alg 2}}_{k,l}= c^{\text{Alg 1}}_{k,l}$. 
							
							Combining the above, we have
							\begin{align}
								c^{\text{Alg 2}} \leq 2 c^{\text{Alg 1}} \leq 4c^*.
							\end{align}	
						\end{proof}
							\begin{table*} [t]
								\caption{Power consumption of  physical processors corresponding to EC2 Instance offers.\protect\footnotemark[1]}
								\begin{center}
									\footnotesize
									\begin{tabular}{ |c|c|c|c| } 
										\hline
										\bf	Instance type & \bf Equivalent Physical Processor & \bf Full Load   & \bf Hourly On-demand Price \\
										\hline
										{m3.medium} & {Intel Xeon E5-2670} & {305W} &  \$0.067 \\ 
										\hline
										{c3.large} & {Intel Xeon E5-2680} & {425W} &  \$0.105 \\ 
										\hline			
									\end{tabular} \label{tab:Amazon_hardware}	
								\end{center}	
							\end{table*}							
		\begin{figure*}
			\centering
			\hspace{-3cm}\subfigure[Group: 1]{\includegraphics[width=.25\linewidth]{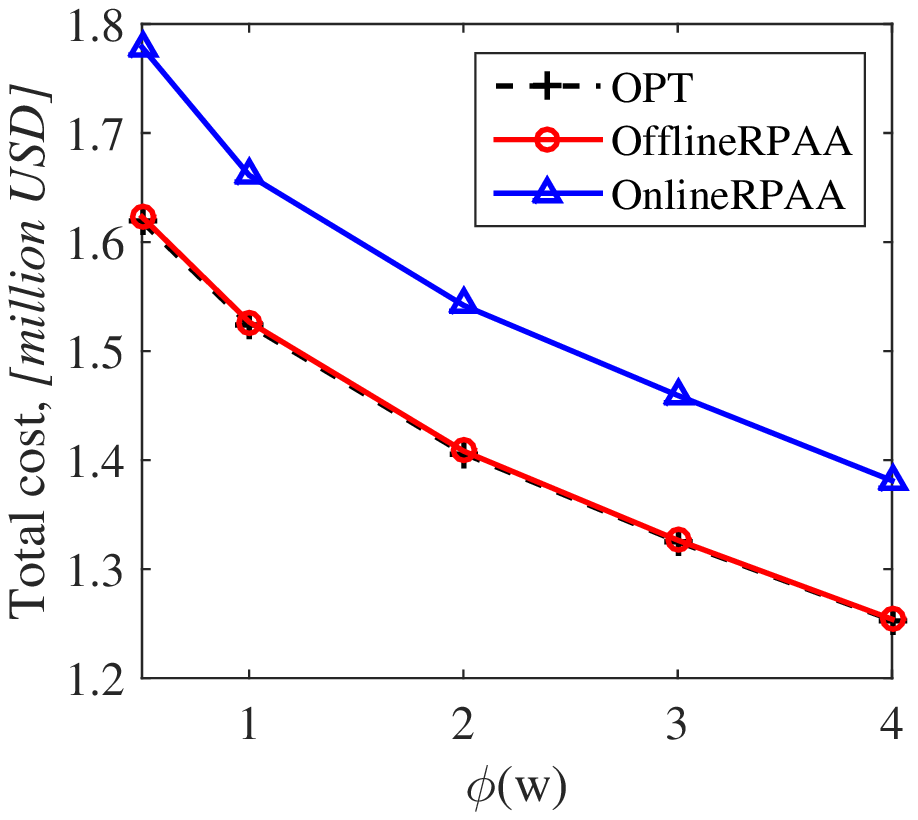}}
			\subfigure[Group: 2]{\includegraphics[width=.25\linewidth]{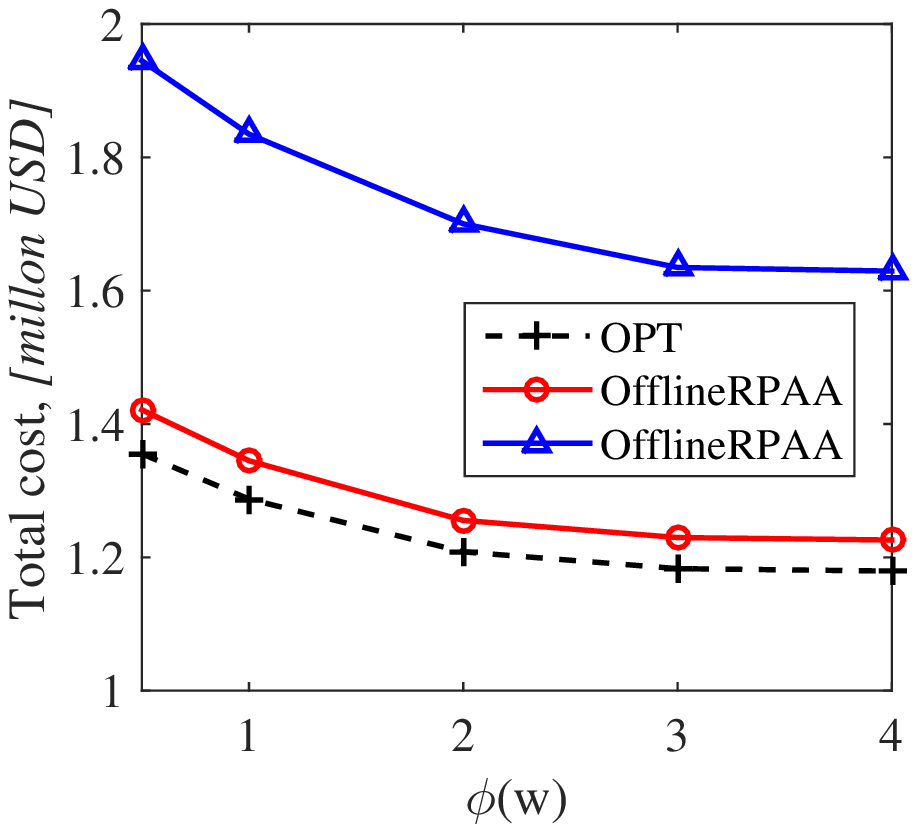} } 
			\subfigure[Group: 3]{\includegraphics[width=.25\linewidth]{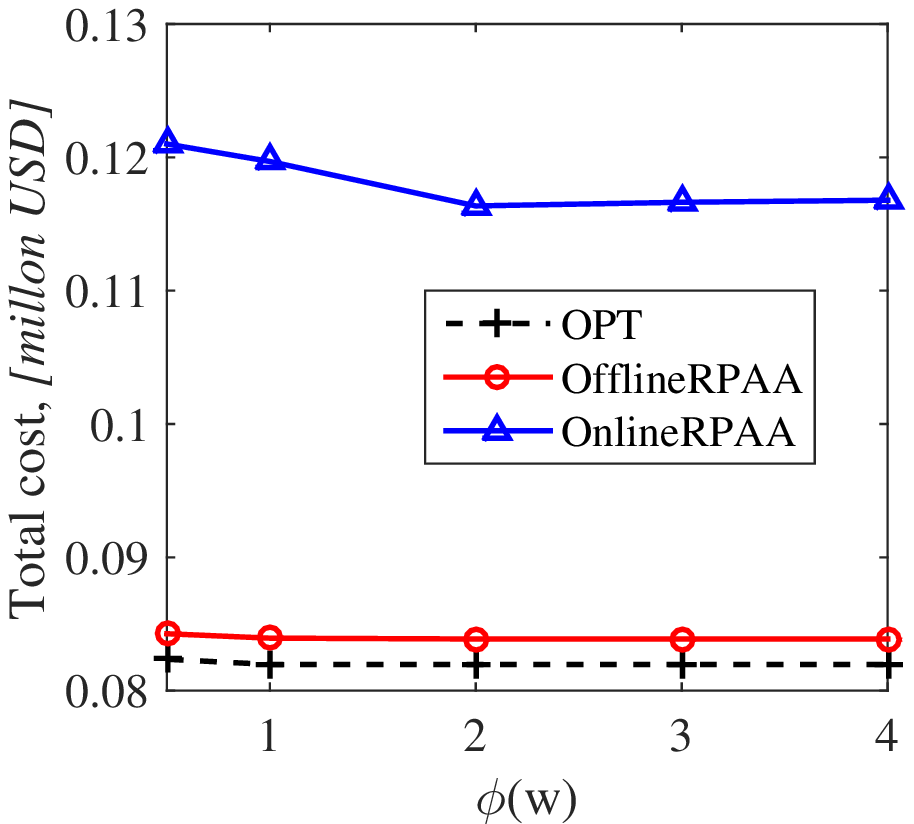}}\hspace{-2cm}
			\caption{The impact of the edge node's capacities  {on the total cost}. \label{fig:alg_per_w_change}}
		\end{figure*}			

			\section{Numerical Results} \label{section:sim}			
			Besides the approximation and competitive ratios derived in the previous sections, we numerically evaluate the performance of the proposed algorithms with extensive simulation based on the parameters specified in Amazon pricing policies \cite{Amazon:price} and Google cluster-usage traces \cite{clusterdata:Reiss2011}.  We set $p' = \$ 0.067$ and $\gamma = \$ 1.0452$.  We also consider   Amazon's reserved instance ``m3.medium'' whose equivalent physical processor is Intel Xeon E5-2670.\footnote[1]{ {The equivalent processors  {are according to} \cite{AWSforMD}.} The power consumption of processors is measured under stress tests in \cite{Chi2014,Jarp:1434748}.}  From  \cite{Chi2014,Jarp:1434748}, its power consumption in full utilization is $305$ W, as shown in Table \ref{tab:Amazon_hardware}.  The edge processing cost $\lambda'$ is set at $\$ 0.03$ based on electricity usage,  {assuming the electricity price at $\$0.1$ per $1$ kWh}.\footnote[2]{The electricity price in US in Oct. 2017 is from \cent $ 8.20$ to \cent $15.40$ \cite{US:electricity}.}  Unless otherwise specified, the default value of the edge's capacity is set at one standard deviation of the demand, and the default effective reservation time $\tau$ is one week. Google cluster-usage traces were measured in one month, i.e., $T= $ one month ($4$ weeks). With these pricing structures, even though the edge processing is computed only from electricity cost, the competitive ratio is $6$ with ``m3.medium'' and ``c3.large''. As explained in Section \ref{section:sys_model}, the value of $\theta$ is not important. Therefore, without loss of generality, we set $\theta = 0$.


		\subsection{Comparison Targets}

		We compare the performance of the two proposed algorithms with the following baselines. 
		\begin{itemize} 
			\item[1.] Edge and On-Demand: Here, we process VM requests first at the edge (see Remark \ref{remark:local_proces}). The excess requests are offloaded to the cloud with  remote on-demand VMs. This strategy is labeled as ``E+Od''.
			\item[2.] Wang \emph{et. al}: Here, we apply the online algorithm proposed by Wang  \emph{et. al} in \cite{WangBroker2015} where the authors only consider remote reserved VMs and remote on-demand VMs. This algorithm is labeled as ``Wang''.
			\item[3.] Edge + Wang \emph{et. al}: Here, we process VM requests first at the edge. The excess requests are allocated by following the algorithm ``Wang'' above. This strategy is labeled as ``E+Wang''.
			\item[4.] On-Demand Only: Here, all VM requests are served by remote on-demand VMs.
		\end{itemize}
		 {These algorithms are compared in different scenarios where demands have different fluctuation levels, in order to reveal} the effects of the edge node's capacities and the cloud node's pricing structure on the performance of these algorithms. We also   investigate the impact of the reservation period on the algorithms' performance.
			\subsection{Google Cluster-Usage Traces}
			{Since the workload information in public clouds is often confidential}, we  use Google cluster-usage traces \cite{clusterdata:Reiss2011} to examine the proposed algorithms in practical scenarios.  {We assume that Google's computing demands approximate public IaaS servers' demands \cite{WangBroker2015}}. Google recorded tasks arriving at  {one of} its  server clusters  {of about 12500 physical machines} within one month in May 2011.  {Here, we use the revised data, version $2.1$, which was updated  {on} Nov. 11, 2014. Since user names are encrypted by strings of characters, we use function ``as.factor'' in  {the R programming language} to determine the number of different strings. We find $901$ users within the trace period. After that, we use function ``as.numeric'' in  {R} to produce a one-to-one mapping from strings to numbers.  {As a result, user names are converted from strings to numbers. Tasks arrive in the time scale of $\mu s$, while the billing cycle of on-demand VMs is one hour. For simplicity, each task is assumed to be equivalent to one VM request, which takes one hour to be processed. We also assume that an instance is required to serve each VM request. Therefore, for each user, its demand curve is computed by counting the number of tasks arriving in each hour.} We then analyze the users' demand within the month. 	Similar to \cite{WangBroker2015}, we also divide users into $3$ groups based on their demand fluctuation level, i.e., the ratio between the user demand's standard deviation and mean.
					\begin{figure*}
						\centering
						\subfigure[Group: 1\hspace{1.7cm}]{\includegraphics[width=.37\linewidth]{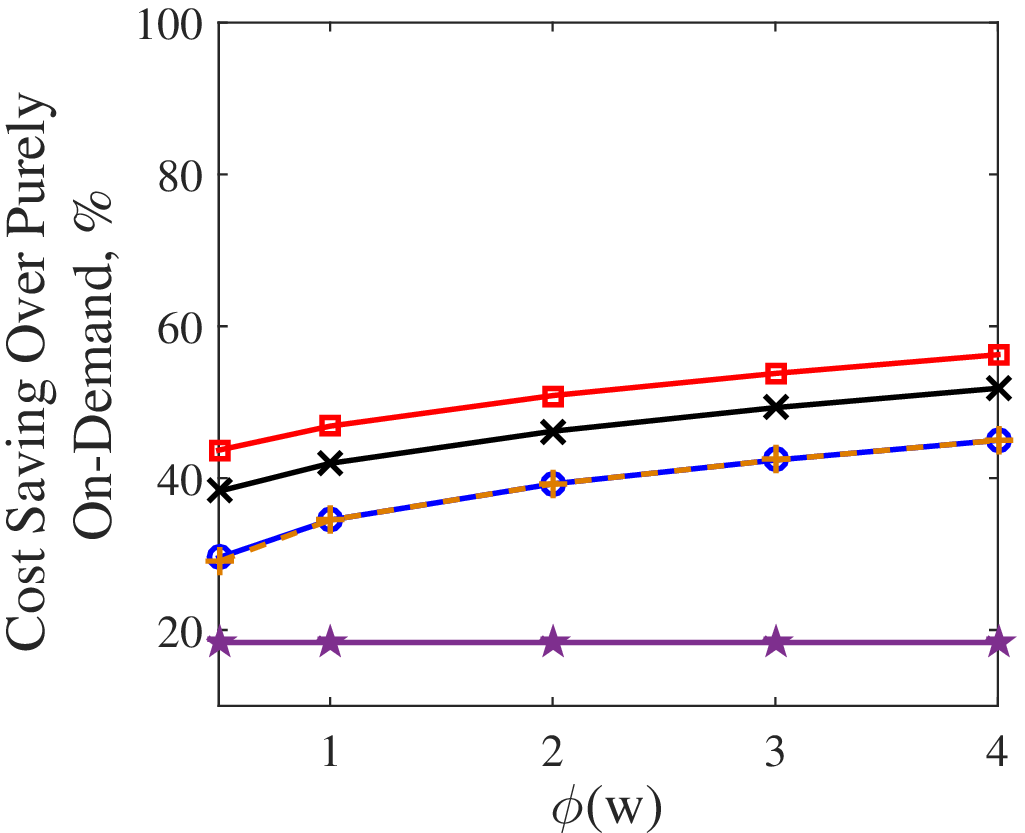}}\hspace{-2.1cm}
						\subfigure[Group: 2\hspace{1.7cm}]{\includegraphics[width=.37\linewidth]{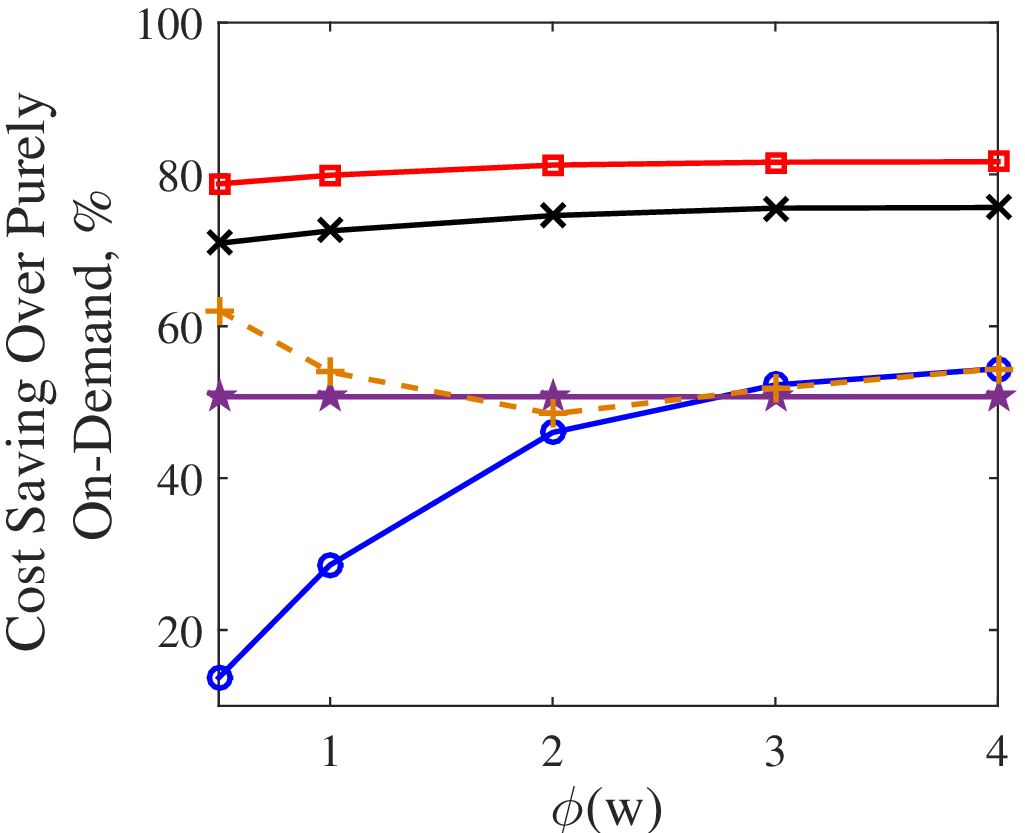}\label{fig:GG_demand_dis_w_change:90_2} } \hspace{-2.1cm}
						\subfigure[Group: 3\hspace{1.7cm}]{\includegraphics[width=.37\linewidth]{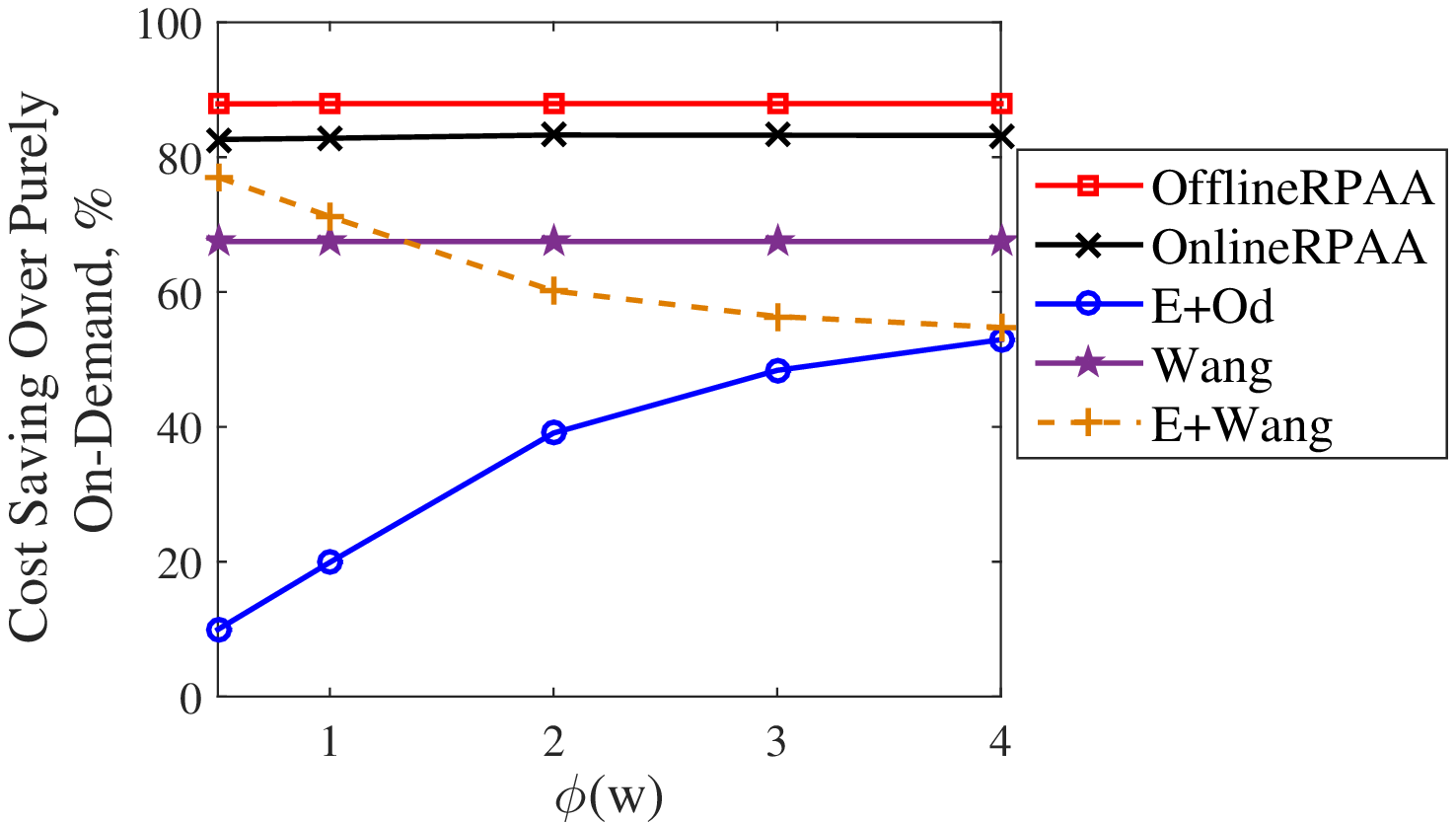}}
						\caption{The impact of the edge node's capacities  {on the cost saving percentage of algorithms over purely assigning requests to on-demand instances}. \label{fig:GG_demand_dis_w_change}}
					\end{figure*}
					\begin{figure}
						\centering
						\includegraphics[width=.8\linewidth]{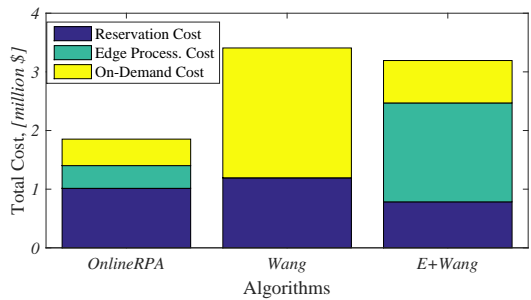}
						\caption{The cost distribution of the three algorithms with $\tau =$ 1 week at Group 2. \label{fig:Cost_allocation}}
					\end{figure}			
				\begin{itemize}
					\item[1. ] Group 1 (High Fluctuation): Users in this group have demand fluctuation level greater than $5$. There are $570$ users in this group.
					\item[2. ] Group 2 (Medium Fluctuation): Users in this group have demand fluctuation levels between 1 and 5. There are $308$ users in this group.
					\item[3. ] Group 3 (Low Fluctuation): Users in this group have demand fluctuation levels less than 1. There are $23$ users in this group.
				\end{itemize}
				
				{The edge node simply adds up all users' demand as the aggregate demand. Since the data is recorded in one month, the length of the aggregate demand vector is  {$672$ timeslots in length}, which is equivalent to the number of hours in $4$ weeks.}

			\subsection{Impact of Edge Node's Capacity}			

						
				In this section, we investigate the impact of the edge node's capacity. Here, the algorithms' performance is evaluated based on their cost savings over purely allocating VM requests to remote on-demand VMs. For each group, we consider the aggregate demand of the group. Let $\sigma$ denote the standard deviation of the aggregate demand, and $\phi(w) = \frac{w}{\sigma}$ be the ratio of edge node's capacity over the demand standard deviation. We consider different edge node  capacities, for five different values of $\phi \in [0.5~1~2~3~4]$.  {We first compare the performance of the two proposed algorithms with the optimal solution found by the Branch-and-Bound algorithm. Then, we study the cost saving of the proposed algorithms over assigning on-demand VMs, in comparison with that of the baselines.}
		
		 {Fig. \ref{fig:alg_per_w_change} shows that the proposed offline algorithm has near optimal performance, while the online algorithm also performs well. Moreover, we observe the importance of edge computing capacity. Fig. \ref{fig:alg_per_w_change} suggests that the faster the demands fluctuate, the faster the total cost decays. Thus, the local VMs at the edge node serves to smooth out the fluctuation of the demands, since their usage cost is in between that of on-demand VMs and reserved VMs..}
				
			We observe in Fig. \ref{fig:GG_demand_dis_w_change} that, firstly, ``OfflineRPAA'' and ``OnlineRPAA'' outperform all alternatives.  Moreover, the comparison between  ``OnlineRPAA'' and ``E+Wang''  suggests that an effective resource allocation algorithm should consider the edge node's parameters instead of a trivial solution such as ``E+Wang''. Finally, the cost savings of ``OfflineRPAA'' and ``OnlineRPAA''  over the pure on-demand strategy increase as the edge's capacity increases, especial for users in Group 1, which demonstrates the benefit of effective utilization of edge computing.

			Fig. \ref{fig:Cost_allocation} explains why ``OnlineRPAA'' performs significantly better  than ``Wang'' and ``E+Wang'', using as an example users in Group $2$. As in Fig. \ref{fig:Cost_allocation}, ``Wang'' not only reserves more than ``OnlineRPAA'' but also has much higher on-demand cost. It implies that many of   reserved VMs in ``Wang'' are under utilized, while ``OnlineRPAA''   takes advantage of edge  VMs to reduce its cost. On the other hand, in ``E+Wang'', VM requests are allocated to the edge  VMs first and then the excess ones followed ``Wang''. It is observed that the cost of reservation of ``E+Wang'' is slightly less than that of ``OnlineRPAA''. However, the  on-demand cost and edge processing cost of ``E+Wang'' are higher than those of ``OnlineRPAA''. Thus, we conclude that ``OnlineRPAA'' utilizes resources more efficiently than  ``E+Wang''.
			
			\subsection{Impact of Reservation Period}
			\begin{figure*}[t]
				\centering
				\subfigure[Group: 1\hspace{1.7cm}]{\includegraphics[width=.37\linewidth]{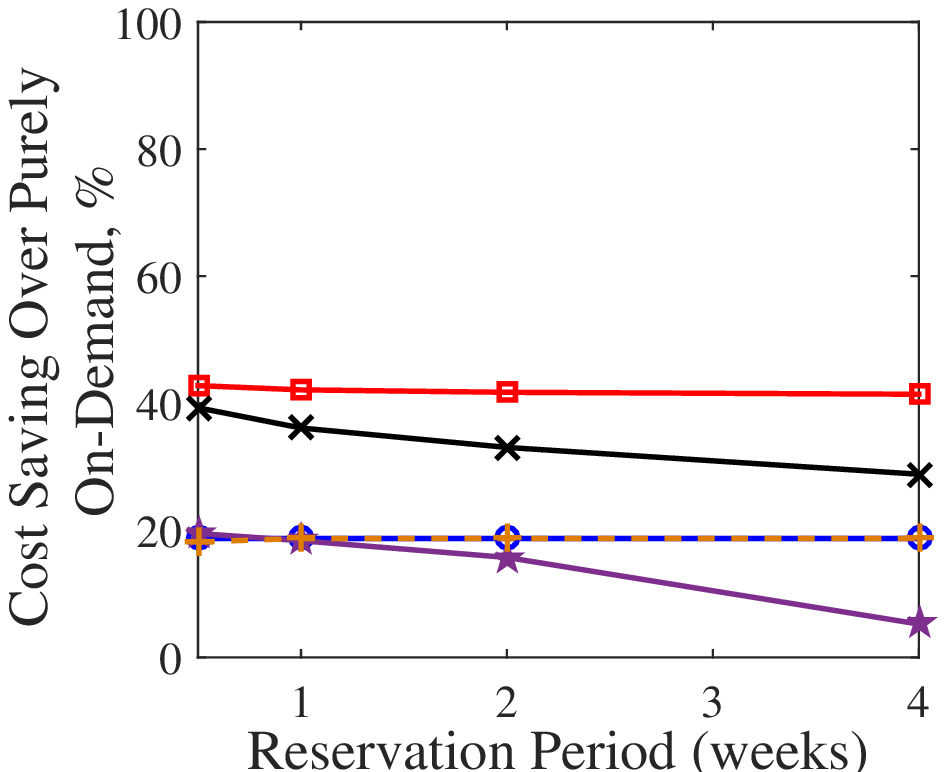}}\hspace{-2cm}
				\subfigure[Group: 2\hspace{1.7cm}]{\includegraphics[width=.37\linewidth]{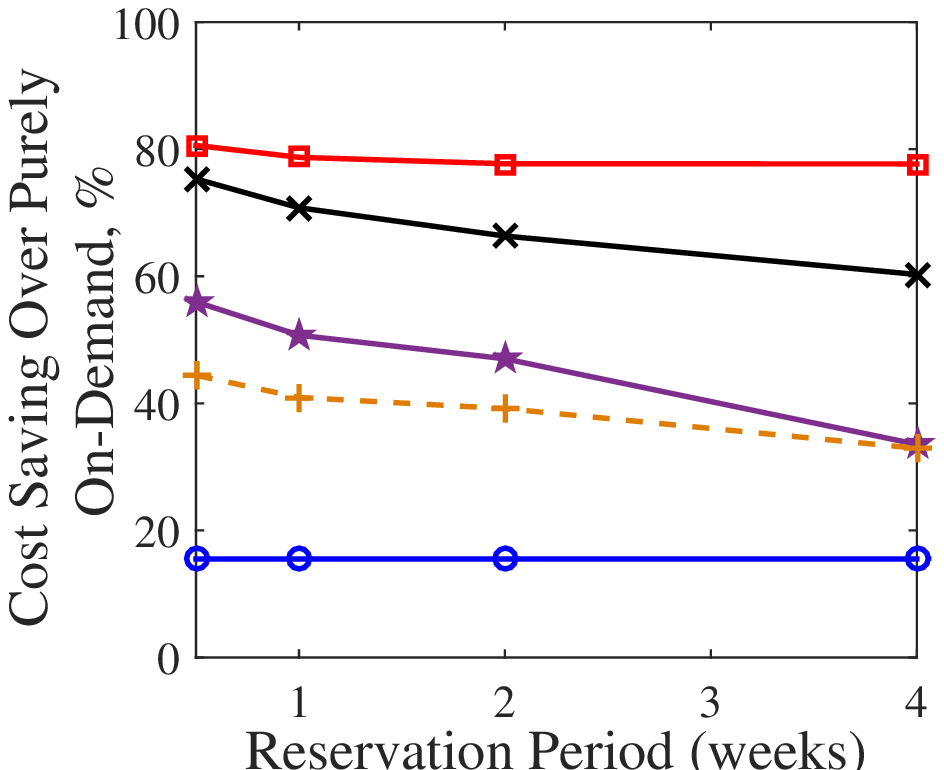}}\hspace{-2cm}
				\subfigure[Group: 3\hspace{1.7cm}]{\includegraphics[width=.37\linewidth]{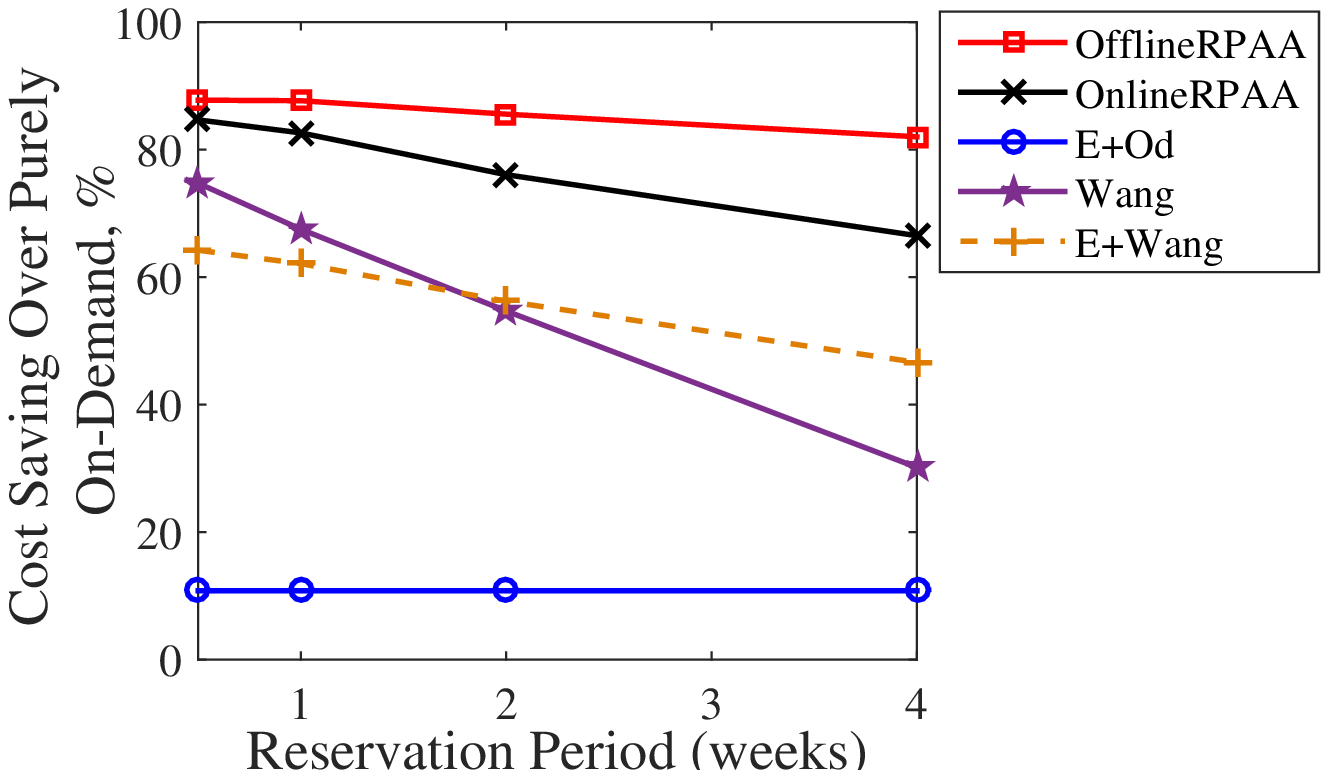} \label{fig:GG_demand_dis_tau_change:group3:90}}
				\caption{The impact of reservation periods  {on the cost saving percentage of algorithms over purely assigning requests to on-demand instances}. \label{fig:GG_demand_dis_tau_change}}
			\end{figure*}			
						\begin{figure}
							\centering
							\includegraphics[width=.8\linewidth]{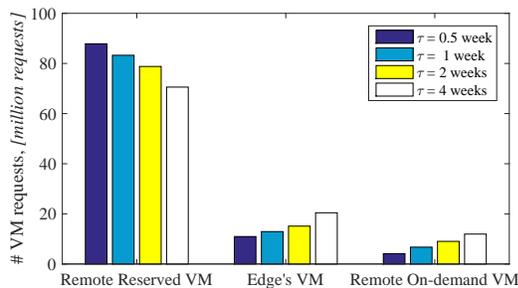} 
							\caption{VM requests allocation of OnlineRPAA with different reservation periods at Group 2. \label{fig:alg_2_taskall}}
						\end{figure}
			In this section, we  investigate the impact of the reservation period on algorithm performance. We further assume that the reservation cost $\gamma$ increases proportionally with respect to the reservation period. We observe from Fig. \ref{fig:GG_demand_dis_tau_change} that the  algorithms' performance decrease when the reservation period increases. This phenomenon is explained in Fig. \ref{fig:alg_2_taskall}, where we investigate the resource allocation of ``OnlineRPAA'' in different reservation periods, with users in Group $2$ as an example. We observe that, as the reservation period increases, the number of VM requests assigned to remote reserved VMs decreases, while the number of requests allocated to the other two options increases. This implies that the number of reservations is reduced. This is reasonable since the upfront cost $\gamma$ increases as $\tau$ increases. Since the edge node reserves less, the performance gain of the proposed algorithms  over the pure on-demand strategy decreases. Furthermore, as shown in Fig. \ref{fig:GG_demand_dis_tau_change:group3:90}, the degradation of ``Wang'' is faster than ``OnlineRPAA' and ``E+Wang''. This again confirms the importance of considering the edge's capacity.

			\section{Conclusion} \label{section:conclusion}		
			
			In this work, a hybrid edge-cloud system is investigated. We consider an edge node that has finite computing capacity and a cloud node that offers remote computing instances under both on-demand and reservation options.  The edge node decides how to acquire computing resource from the cloud node and allocate its local and acquired resources to reduce its cost of serving users demands. An offline resource procurement and allocation solution is proposed with the prior knowledge of future demand. We then  propose an online resource procurement and allocation algorithm, which makes irrevocable decision without knowledge of future demand. For both algorithms, the worst-case performance with respect to the offline optimum is provided.  Numerical results show the importance of considering the edge node's computing capacity. Firstly, the existence of computing capacity at the edge can significantly  reduce its cost. Secondly, we observe that under typical cloud and edge pricing structure, the proposed online algorithm, which considers the edge's cost and capacity, significantly outperforms alternative solutions, including one that always processes user requests first at the edge. Finally, when the reservation period and the upfront cost are increased proportionally,  the performance of the algorithms decreases. However, the degradation of the algorithms'  {performance lessens} if the local capacity of the edge node is increased.

			\bibliographystyle{IEEEtran}
			\bibliography{IEEEabrv,references}

		\end{document}